\newtheorem{theorem}{Theorem}[section]
\newtheorem{proposition}[theorem]{Proposition}
\newtheorem{lemma}[theorem]{Lemma}
\newtheorem{defi}[theorem]{Definition}
\newtheorem{rema}[theorem]{Remark}
\newtheorem{exam}[theorem]{Example}
\newenvironment{definition}{\begin{defi}\rm}{\hfill $\lhd$\end{defi}}
\newenvironment{remark}{\begin{rema}\rm}{\end{rema}}
\newenvironment{example}{\begin{exam}\rm}{\end{exam}}
\newenvironment{proof}{\begin{trivlist}\item[]{\bf
Proof.}}{\hfill {\sc qed}\end{trivlist}}
\newtheorem{claim2}{\sc Claim}
\renewcommand{\phi}{\varphi} 
\tikzset{
    vertex/.style = {
        circle,
        fill            = black,
        outer sep = 2pt,
        inner sep = 1pt,
    }
}
\newcommand\restr[2]{{
  \left.\kern-\nulldelimiterspace 
  #1 
  \vphantom{\big|} 
  \right|_{#2} 
  }}
\title{\textbf{Two Views of Constrained Differential Privacy: Belief Revision and Update}}
\author {
    \textbf{Likang Liu\footnotemark[1]\enspace \textsuperscript{\rm 1}},
    \textbf{Keke Sun\footnote{Contribute Equally.}\enspace \textsuperscript{\rm 1}},
    \textbf{Chunlai Zhou\footnote{Corresponding Author.}\enspace \textsuperscript{\rm 1}},
    \textbf{Yuan Feng \textsuperscript{\rm 2}}
\\
    \text{\small\textsuperscript{\rm 1} School of Information, Renmin University of China, Beijing, CHINA}\\
    \text{\small\textsuperscript{\rm 2} Centre of Quantum Software and Information, University of Technology Sydney, AUSTRALIA}\\
    \text{\small micahliu2012@gmail.com, skk2020@ruc.edu.cn, czhou@ruc.edu.cn, Yuan.feng@uts.edu.au}
}
\date{}
\begin{document}

\maketitle

\begin{abstract}
In this paper, we provide two views of constrained differential private (DP) mechanisms. The first one is as belief revision.  A constrained DP mechanism is obtained by standard probabilistic conditioning, and hence can be naturally implemented by Monte Carlo algorithms.  The other is as belief update.  A constrained DP is defined according to $l_2$-distance minimization postprocessing or projection and hence can be naturally implemented by optimization algorithms.  The main advantage of these two perspectives is that we can  make full use of the machinery of belief revision and update to  show basic properties for constrained differential privacy especially some  important \emph{new} composition properties.  Within the framework established in this paper, constrained DP algorithms in the literature can be classified  either as belief revision or belief update.  At the end of the paper, we demonstrate their differences especially in utility in a couple of scenarios. 
\end{abstract}
\section{Introduction}\label{sec:introduction}

Theories of belief revision and update have been an important field in AI community, especially in knowledge representation and database systems \cite{FAI03}.  An agent's beliefs about the world may be incorrect or incomplete and she wants to change the beliefs.  Such a process is known as \emph{belief revision} \cite{AlchourronGM85}. 
Belief revision is intended to capture changes in belief state reflecting new information
about a \emph{static} world. In contrast, belief update is intended to capture changes of belief in response to a changing world. An agent's beliefs may be correct at one time. But as the world changes, for example, other agents take acts and disrupt their environment, certain facts become true and others false.  The agent must accommodate these changes to update its state of beliefs. Such a process is called \emph{belief update} \cite{KatsunoM91}.  Besides the traditional symbolic formalism, probability theory can be used to represent an agent's belief state. In the probabilistic setting, an agent's cognitive state is represented by a probability function $p$ over 
a set $\Omega$ of possible worlds. \emph{Conditioning} and \emph{imaging} are two probabilistic versions of belief change that correspond to belief revision and belief update, respectively.  Upon learning a sure fact $C$, on the one hand, conditioning works by suppressing the possible worlds which are inconsistent with $C$ and normalizing the probabilities of the remaining possible worlds. It is a fundamental approach in probabilistic reasoning and statistical inference \cite{Pearl88}. On the other hand, imaging (or updating) performs by transferring the probabilities of worlds outside $C$ to the closest worlds in $C$. It is a common method to study intervention and causality \cite{Pearl2009causality}.  

In this paper, we take the two views of belief revision and update to study \emph{constrained differential privacy}. Differential
privacy (DP) is a mathematically rigorous definition of privacy which addresses the paradox of learning nothing about an individual while learning useful information about a population \cite{DworkMNS06,DworkR14}.
Differentially private data releases are often required to satisfy a
set of external constraints that reflect the legal, ethical, and logical
mandates to which the data curator is obligated \cite{Abowd2019CensusTD,Hay2010boosting}.  For example, in US Census 2020, the so-called touchstone of DP by Dwork, the Census
Bureau is constitutionally mandated to report the total population of each state as \emph{exactly} enumerated without subjecting them to any perturbation protection;  in data queries, constraints are often used to improve the accuracy while maintaining the quality of privacy protection of the unconstrained DP mechanisms. 
The central question in designing DP mechanisms with those constraints (called \emph{constrained DP}) is how to integrate randomized DP privacy mechanisms with \emph{deterministic} constraints while maintaining the standard trade-off between privacy protection and data utility. Our main contribution  is to study this integration from the perspectives of belief revision and update.

 In this paper, we mainly focus on those constraints that are known to hold also for the original datasets, which are hence called \emph{invariants}. We first give a definition of \emph{data-independent} invariants $C$ (Definition \ref{def:invariant}), which is a subset of the output space $\mathbb{R}^n$ of the privacy mechanism $M$.  In this paper, Laplace and Gaussian mechanisms are considered.  For a given dataset $D$, $M(D)$ is a (continuous) random vector over $\mathbb{R}^n$. Let $P_{M(D)}$ and $p_{M(D)}$ denote the corresponding probability distribution and (density) function, which is regarded as an agent's belief state. We then design constrained DP mechanisms by performing belief change on $p_{M(D)}$ (or $P_{M(D)}$). In order to revise $p_{M(D)}$  by conditioning, we have to consider two cases: $P_{M(D)}(C) >0$ and $P_{M(D)}(C)=0$.  When $P_{M(D)}(C) >0$, conditioning works as usual.  One of our technical contributions is to deal with the challenge when $P_{M(D)}(C) = 0$. We use the techniques of changing variables in multivariate calculus to compute the conditional density.  Even though the invariant $C$ is data-independent, conditioning is data-dependent  because the denominator in conditional density function depends on the original data $D$. So conditioning may add more privacy loss as shown in \cite{GongM20}. In this paper, we show that, if $M$ is additive and $C$ is represented by a group of linear equalities, conditioning does not incur any extra privacy loss (Lemma \ref{lem: conditioning-privacy-preserving}).  In addition to the standard postprocessing and composition properties (Lemmas \ref{lem:postprocessing-conditioning},\ref{lem:composition-conditioning}), we obtain from the characterizing property of conditioning an interesting form of composition: conditional privacy mechanism on the disjoint union of two invariants is a convex combination of the mechanisms conditioned on individual invariants (Theorem \ref{thm:disjoint-union}).  From the perspective of belief update, we perform imaging on $p_{M(D)}$ and show the standard postprocessing and composition properties (Lemmas \ref{lem:postprocessing-imaging},\ref{lem:composition-imaging}).  In contrast, imaging is a privacy-preserving postprocessing and does not incur any privacy loss (Lemma \ref{lem:privacy-preserving-imaging}). Moreover, we obtain a characteristic proposition about composition and show that imaging on mixture of privacy mechanisms is the mixture of imaging privacy mechanisms (Theorem \ref{thm:mixture}).  In addition to the above analysis of privacy, we also perform analysis of utility and perform some experiments to show the differences between the two perspectives as belief revision and update.   The theory of belief revision and update can guide us in choosing the appropriate constraining approach, thereby clarifying many confusions in the literature regarding these two constrained DPs.

The paper is organized as follows. We first present some background about belief change and differential privacy. Then we provide a detailed analysis of the two views. At the end of the paper, we discuss some related works and future research.  The following Figure \ref{fig:1} provides the guide of the paper. 

\begin{figure}[htbp]
\centering
\includegraphics[height=5.5cm,width=8cm]{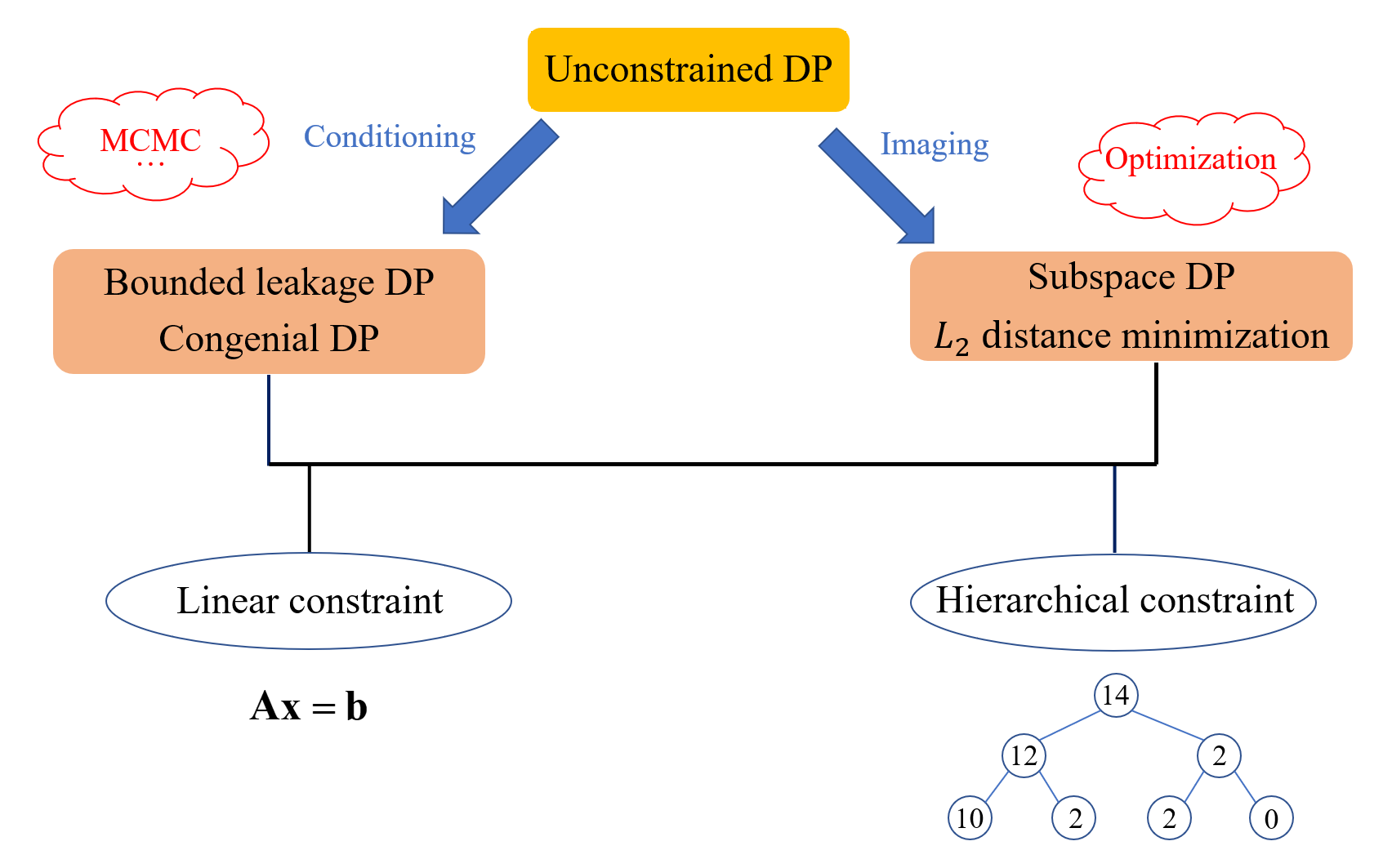}
\caption{Guideline of the Paper}
\label{fig:1}
\end{figure}


\section{Preliminaries}\label{sec:preliminaries}

In this paper, we are mainly concerned about belief revision and update in the probabilistic setting. So, in this section, we provide some basic knowledge.  For details about probabilistic belief revision and update, one may refer to \cite{Gardenfors1988knowledge} and \cite{DuboisP93}.
In the probabilistic framework, a belief state is represented by a probability measure $P$ (or a probability function $p$) on the set $\Omega$ of possible worlds.  In order to motivate the views of constrained differential privacy from the perspective of belief change, we first assume that $\Omega$ is finite. 
If we learn that event $E$ has occurred, i.e., we are certain that $E$ is true, the prior  state $P$ is revised according to Bayesian \emph{conditioning}: for any $E' \subseteq \Omega$, $P(E' | E) =  \frac{P(E\cap E')}{P(E)}$.
It is not well-defined when $P(E)=0$.  Bayesian conditioning is the probabilistic counterpart of belief revision for a \emph{static} world. 
There is a well-known characterization of Bayesian conditioning: there is no \emph{relative change of beliefs} in the process of conditioning.

 
\begin{proposition}\label{prop:conditioning} (Proposition 3.2.1. in \cite{Halpern2017}) Let $P(E)>0$.  A probability measure $P'$ on $\Omega$ is obtained from $P$ according to Bayesian conditioning if and only if $P'$ satisfies the following two conditions:
	\begin{enumerate}
		\item $P'(\bar{E}) =0$;
		\item $\frac{P'(B)}{P'(B')} = \frac{P(B)}{P(B')} $ for any $B, B'\subseteq E$ such that $P(B') >0$. 
	\end{enumerate} 
\end{proposition}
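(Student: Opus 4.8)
The plan is to prove both implications by directly unwinding the definition of Bayesian conditioning, namely $P'(A) = P(A\cap E)/P(E)$ for every $A\subseteq\Omega$. The proposition is essentially a reformulation of that formula, so I expect neither direction to need more than elementary manipulations with finite additivity; the only place calling for a little care is the well-definedness of the ratios in condition~2.

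For the ``only if'' direction I would assume $P'$ is obtained from $P$ by conditioning on $E$, so $P'(A)=P(A\cap E)/P(E)$. Then $P'(\bar E)=P(\bar E\cap E)/P(E)=P(\emptyset)/P(E)=0$, which is condition~1. For condition~2, if $B,B'\subseteq E$ then $B\cap E=B$ and $B'\cap E=B'$, so $P'(B)=P(B)/P(E)$ and $P'(B')=P(B')/P(E)$; since $P(B')>0$ forces $P'(B')>0$, dividing these two equalities gives $P'(B)/P'(B')=P(B)/P(B')$.

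For the ``if'' direction I would start from conditions~1 and~2 and show $P'$ must equal the conditional measure. First, because $P'$ is a probability measure and $\Omega$ is finite, additivity plus condition~1 give $P'(E)=P'(\Omega)-P'(\bar E)=1-0=1$; in particular the denominators arising below are nonzero. Next, reading condition~2 in its cross-multiplied form $P'(B)\,P(B')=P'(B')\,P(B)$ (which avoids any division-by-zero concern) and instantiating it at $B':=E$, legitimate since $P(E)>0$, yields $P'(B)\,P(E)=P'(E)\,P(B)=P(B)$, i.e. $P'(B)=P(B)/P(E)$ for every $B\subseteq E$. Finally, for arbitrary $A\subseteq\Omega$ I would write $A$ as the disjoint union of $A\cap E$ and $A\cap\bar E$; finite additivity of $P'$ together with $0\le P'(A\cap\bar E)\le P'(\bar E)=0$ gives $P'(A)=P'(A\cap E)=P(A\cap E)/P(E)$, which is exactly $P(A\mid E)$.

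The argument is largely routine, and the main obstacle is more bookkeeping than mathematics: making sure the ratios in condition~2 are well defined in the converse direction, which is precisely why I would first derive $P'(E)=1$ from condition~1 and then pass to the multiplicative reformulation of condition~2 before specializing to $B'=E$. I would also remark that finiteness of $\Omega$ is used here only to make complementation and additivity transparent; the same proof applies verbatim in a general measurable setting once all the sets in play are assumed measurable.
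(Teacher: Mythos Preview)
Your proof is correct and is the standard elementary argument for this characterization. Note, however, that the paper does not actually supply its own proof of this proposition: it is stated as a known result and attributed to Halpern's book (Proposition~3.2.1 there), so there is nothing in the paper to compare your argument against beyond the citation itself.
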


There is another more interesting characterization of Bayesian conditioning.  It is shown \cite{Gardenfors1988knowledge} that, for $C$ and $C'$ such that $C\cap C' =\emptyset$, conditioning  satisfies the property that $P(\cdot | C\cup C')$ is a convex combination of $P(\cdot | C)$ and $P(\cdot | C')$. 

\begin{proposition} \label{lem:convex-combination} For the above defined Bayesian conditioning, and $C, C'\subseteq \Omega$ such that $C\cap C' =\emptyset$, $P(B| C\cup C')  = \lambda P(B|C) + (1-\lambda) P(B | C')$ with $\lambda = \frac{P(C)}{P(C) + P(C')}$.
\end{proposition}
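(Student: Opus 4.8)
The plan is to unwind the definition of Bayesian conditioning and exploit finite additivity of $P$ together with the disjointness hypothesis $C\cap C'=\emptyset$. First I would record the well-definedness bookkeeping: for $P(B\mid C)$ and $P(B\mid C')$ to make sense we need $P(C)>0$ and $P(C')>0$ (this is the standing hypothesis under which conditioning is defined, cf.\ Proposition~\ref{prop:conditioning} applied to $C$ and to $C'$). These in turn give $P(C\cup C')=P(C)+P(C')>0$, so the left-hand side $P(B\mid C\cup C')$ is also well-defined, and $\lambda=\frac{P(C)}{P(C)+P(C')}\in(0,1)$ with $1-\lambda=\frac{P(C')}{P(C)+P(C')}$.

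The key step is then a one-line computation. Since $C$ and $C'$ are disjoint, the events $B\cap C$ and $B\cap C'$ are disjoint and their union is $B\cap(C\cup C')$, so finite additivity yields
\[
P\bigl(B\cap(C\cup C')\bigr)=P(B\cap C)+P(B\cap C').
\]
Dividing through by $P(C\cup C')=P(C)+P(C')$ and inserting the trivial factors $\tfrac{P(C)}{P(C)}$ and $\tfrac{P(C')}{P(C')}$ in the two summands gives
\[
P(B\mid C\cup C')=\frac{P(C)}{P(C)+P(C')}\cdot\frac{P(B\cap C)}{P(C)}+\frac{P(C')}{P(C)+P(C')}\cdot\frac{P(B\cap C')}{P(C')},
\]
which is exactly $\lambda\,P(B\mid C)+(1-\lambda)\,P(B\mid C')$ with $\lambda$ as claimed.

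There is essentially no hard part: the ``main obstacle'' is merely the care needed in the first paragraph to ensure all three conditional probabilities and the weight $\lambda$ are legitimate, and to note that the single use of the hypothesis $C\cap C'=\emptyset$ is precisely what lets the numerator split as a sum. One could optionally remark afterwards that this convex-combination identity, together with $P(\overline{C\cup C'})=0$, is in fact what \emph{forces} conditioning, so the proposition is the ``easy direction'' of a characterization; but for the statement as given the direct verification above suffices.
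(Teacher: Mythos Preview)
Your proof is correct; it is the standard direct verification using finite additivity and disjointness. The paper itself does not supply a proof of this proposition --- it is stated as a known characterization of conditioning with a citation to G\"ardenfors (1988) --- so there is nothing to compare against, and your argument is exactly the elementary one that would be expected.
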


Another important characterization of Bayesian conditioning is about the minimal change principle through the Kullback-Leibler information distance: for any probability measures $P$ and $P'$ on $\Omega$,
$I(P, P'):=  \sum_{\omega\in \Omega} P'(\omega) \log \frac{P'(\omega)}{P(\omega)}$. 
The conditional probability $P(\cdot | C)$ minimizes the KL distance $I(P,P')$ from the prior $P$ under the constraint that $P'(C)=1$.


Now we describe \emph{belief update} in the probabilistic framework.  Assume that, for any event $C\subseteq \Omega$ and for any $\omega\in \Omega$, there is a unique $\omega_C\in C$ which is the closest world from $\omega$. According to probabilistic \emph{imaging} (projection), upon learning that $C$ is true, the 
probability mass $P(\omega)$ assigned to a world $\omega \in \Omega$ is then transferred to $\omega_C$, the closest world in $C$.  In other words, the updated probability after imaging can be written as 
\begin{align}
 P_C(\omega): = & \sum_{\omega'_C = \omega} P(\omega') \label{eq:imaging}
\end{align}
Such a process is called \emph{probabilistic belief update}. There is a nice characterization of probabilistic imaging. It is the only updating rule that is homomorphic. Mathematically, for any two probability measures $P$ and $P'$ on $\Omega$ and any $\lambda \in [0, 1]$, 
\begin{align}
(\lambda P + (1-\lambda) P')_C = & \lambda P_C + (1-\lambda) P'_C. \label{eq:mixture-preserving}
\end{align}
In other words, imaging preserves probabilities under the mixture of probabilities.  In particular, imaging may turn an impossible world into a possible one, i.e., $P(\omega_C) = 0$ but $P_C(\omega_C) >0$.  On the other hand, imaging may turn a sure event to be uncertain, i.e., $P(B) = 1$ but $P_C(B) < 1$.  However, both cases are impossible in the Bayesian conditioning. 

\begin{example}  Here we adopt an example from Section 2.3 in \cite{DuboisP93} to illustrate the difference between belief revision and update.  There is either an apple (a) or a banana (b) in a box.  Let $\omega_1, \omega_2, \omega_3$ and $\omega_4$ denote all the four possible states where $a\wedge b$ is true, $a\wedge \neg b$ is true, $\neg a \wedge b$ is true and $\neg a \wedge \neg b$ is true, respectively.   Our current epistemic state $p$ is represented by $p(\omega_1)=p(\omega_4) =0$, $p(\omega_2)=0.7$ and $p(\omega_3)=0.3$. After learning that there is no apple, i.e., $C=\{\omega_3, \omega_4\}$,  the epistemic state $p$ changes according to Bayesian conditioning to $p(\omega_3 | C) =1$ and $p(\omega_1 |C) = p(\omega_2 |C) = p(\omega_4 |C) =0$.  In other words, we infer that there is a banana in the box.  Next we consider imaging or belief update. In $C$, $\omega_3$ is the closest world to $\omega_1$ and $\omega_4$ is the closest to $\omega_2$.  So $p_C(\omega_3) = p(\omega_1)+ p(\omega_3) = 0.3$ and $p_C(\omega_4) = p(\omega_2)+ p(\omega_4) =0.7$. This implies that it is more probable that the box is empty.
In belief revision, $C$ is interpreted as ``there is no apple in the box" (static world), while, according to belief update, it means ``there is no longer any apple" (world change).

\end{example}

	Let $\mathcal{T}$ be a set of possible \emph{records}. Typically we use $t$ to denote records (or data).  A \emph{dataset} is a finite indexed family of records.  We use $\mathcal{D}$ to denote the space of all possible datasets. Elements of $\mathcal{D}$ are typically denoted as $D$ or $D'$.  For any $i\leq |D|, D_i$ denotes the $i$-th record in $D$ and $D_{-i}$ is the dataset $D$ with $D_i$ removed.  In other words, $D_{-i} = D\setminus \{D_i\}$.  Let $Y$ be the set of \emph{outputs} which are usually denoted by $y, y', y_1$ or $y_2$. A \emph{randomized mechanism} $\mathcal{M}: \mathcal{D}\rightarrow Y$ maps a dataset $D$ to a random variable $M(D) $ over $Y$.   In other words, for any $D\in \mathcal{D}$ and $E\subseteq Y$, $Pr[\mathcal{M}(D)\in E]$ defines a probability measure over $Y$.  \emph{Differential privacy} is a privacy guarantee that  a randomized algorithm behaves similarly on
	\emph{neighbouring} input databases which differ on at most one record.  The two
	datasets $D$ and $D'$ can differ in two possible ways: either they have the same size and differ only on one
	record ($|D|=|D'|$, $D_i\neq D'_i$ and, for any $j\neq i$, $D_j=D'_j$), or one is a copy of the other with one extra record ($D'=D_{-i}$ for some $i$). These two options do not protect the
	same thing: the former protects the value of the records while the latter also protects their presence
	in the data: together, they protect any property about a single individual.  The original definition of differential privacy in \cite{DworkMNS06} takes the second notion of neighbourhood.   In this paper, these two notions apply, which are both denoted $D\sim D'$. Usually we use capital letters to denote random variables and lower-case letters to denote their  values. 
	
	\begin{definition} For arbitrary $\epsilon, \delta >0$, a randomized mechanism $\mathcal{M}$ is called $(\epsilon,\delta)$-\emph{differentially private} if, for any $S\subseteq Y$ and $D$ and $D'$ such that $D\sim D'$, the following inequality hold:
		\begin{align}
		Pr[\mathcal{M}(D)\in S] & \leq e^{\epsilon}Pr[\mathcal{M}(D')\in S] +\delta. \nonumber
		\end{align}
		If $\delta=0$, we say that $\mathcal{M}$ is $\epsilon$-differentially private. 
	\end{definition}

	In this paper, we assume that the output set $Y = \mathbb{R}^n$.  For a function $f: \mathcal{D}\rightarrow \mathbb{R}^n$ and $k=1,2$, the $l_k$-sensitivity of $f$ is defined as
$\Delta_k(f) =  \max_{D \sim D'} \| f(D)-f(D')\|_k$. 
We mainly consider the following two most commonly-used privacy mechanisms.  

\begin{definition} (Laplace Mechanism) \label{def:Laplace} For a deterministic query function $f: \mathcal{D}\rightarrow \mathbb{R}^n$, the Laplace mechanism $M: \mathcal{D}\rightarrow \mathbb{R}^n$ is given by $
	M(D) =f(D) + (U_1, U_2, \cdots, U_n)$
where $U_1,U_2, \cdots, U_n$ are i.i.d. Laplace random variables with the probability density function 
\begin{align}
	Lap (x| \lambda) = & \frac{1}{2\lambda} \exp \left(- \frac{|x|}{\lambda}\right) \nonumber
\end{align}
We will sometimes write $Lap(\lambda)$ to denote the Laplace distribution with the scale $\lambda$, and will sometimes abuse notation and write $Lap(\lambda)$ to denote a random variable $U \sim Lap(\lambda)$. 
\end{definition}

\begin{definition} (Gaussian Mechanism) \label{def:Gaussian} For a deterministic query function $f: \mathcal{D}\rightarrow \mathbb{R}^n$, the Gaussian mechanism $M: \mathcal{D}\rightarrow \mathbb{R}^n$ is given by $M(D) =  f(D) + (U_1, U_2, \cdots, U_n)$.
where $U_1,U_2, \cdots, U_n$ are i.i.d. Gaussian random variables with the probability density function 
$\mathcal{N}(0, \sigma^2)$. Similarly, 
we will sometimes write $\mathcal{N}(0, \sigma^2)$ to denote a random variable $U \sim \mathcal{N}(0, \sigma^2)$. 
\end{definition}

To answer queries under differential privacy, we use the
Laplace and Gaussian mechanisms, which achieves differential privacy
by adding noise to query answers. 
If we \emph{calibrate} the Laplace and Gaussian noises to the query $f$, we can show that the above two mechanisms are  differentially private. 

\begin{proposition}\cite{DworkMNS06} Let $\lambda = \frac{\Delta_1(f)}{\epsilon}$ and $\delta = \frac{\Delta_2(f) (1+ \sqrt{1+ \ln (1/\delta)})}{\epsilon}$. We have 
	\begin{enumerate}
		\item The above Laplace mechanism with $Lap(\lambda)$ is $\epsilon$-DP;
		\item The above Gaussian mechanism $\mathcal{N}(0, \sigma^2)$ is $(\epsilon, \delta)$-DP. 
	\end{enumerate}
\end{proposition}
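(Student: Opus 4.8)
The plan is to establish the two claims separately; in each case I would fix a neighbouring pair $D\sim D'$, compare the output densities of $M(D)$ and $M(D')$ pointwise on $\mathbb{R}^n$, and then integrate the pointwise bound over an arbitrary measurable $S\subseteq\mathbb{R}^n$.

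\emph{Laplace mechanism.} Write $a=f(D)$ and $b=f(D')$, with coordinates $a_i,b_i$. Since the noise vector has independent coordinates, the density of $M(D)$ at $y$ is $p_{M(D)}(y)=\prod_{i=1}^n\frac{1}{2\lambda}\exp(-|y_i-a_i|/\lambda)$, and likewise for $D'$. Forming the ratio and applying the triangle inequality $|y_i-b_i|-|y_i-a_i|\le|a_i-b_i|$ coordinatewise gives $p_{M(D)}(y)/p_{M(D')}(y)\le\exp(\|a-b\|_1/\lambda)\le\exp(\Delta_1(f)/\lambda)=e^{\epsilon}$, using the definition of $l_1$-sensitivity and the choice $\lambda=\Delta_1(f)/\epsilon$. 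Integrating this over $S$ yields $Pr[M(D)\in S]\le e^{\epsilon}Pr[M(D')\in S]$, i.e. $\epsilon$-DP. This part is routine.

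\emph{Gaussian mechanism.} This is the part I expect to require the most care, and I would run the privacy-loss random variable argument. With $a=f(D)$, $b=f(D')$ and $v=a-b$ (so $\|v\|_2\le\Delta_2(f)$), the privacy loss $L(y)=\ln\big(p_{M(D)}(y)/p_{M(D')}(y)\big)$ simplifies, for spherical Gaussian noise of variance $\sigma^2$, to the affine function $L(y)=\frac{\langle v,\,y-a\rangle}{\sigma^2}+\frac{\|v\|_2^2}{2\sigma^2}$. When $y$ is drawn from $M(D)$ we have $y-a\sim\mathcal{N}(0,\sigma^2 I)$, so $\langle v,\,y-a\rangle$ is a centred Gaussian with variance $\sigma^2\|v\|_2^2$, and hence $L$ is Gaussian with mean $\|v\|_2^2/(2\sigma^2)$ and variance $\|v\|_2^2/\sigma^2$. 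The key step is to pick $\sigma$ (matching the stated formula, i.e. proportional to $\Delta_2(f)/\epsilon$ up to the factor $1+\sqrt{1+\ln(1/\delta)}$) so that $Pr_{y\sim M(D)}[L(y)>\epsilon]\le\delta$; after completing the square this reduces to the standard Gaussian tail bound $Pr[Z>t]\le e^{-t^2/2}$, together with a short case split on whether $\epsilon$ is large or small relative to $1$ and the observation that the bound is monotone in $\|v\|_2\le\Delta_2(f)$.

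Finally I would invoke the standard reduction from a privacy-loss tail bound to $(\epsilon,\delta)$-DP: given any measurable $S$, split it into $S_1=\{y\in S:L(y)\le\epsilon\}$ and its complement within $S$; on $S_1$ the density ratio is at most $e^{\epsilon}$, while the complement contributes at most $Pr_{y\sim M(D)}[L(y)>\epsilon]\le\delta$, yielding $Pr[M(D)\in S]\le e^{\epsilon}Pr[M(D')\in S]+\delta$. The only genuine obstacle here is the constant bookkeeping in the Gaussian tail bound — ensuring the chosen $\sigma$ works uniformly over all $v$ with $\|v\|_2\le\Delta_2(f)$ — and since this is exactly the classical calibration of \cite{DworkMNS06} I would cite it for the precise constants rather than reproduce every inequality.
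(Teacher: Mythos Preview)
Your sketch is correct and follows the standard route. Note, however, that the paper does not give its own proof of this proposition: it is stated in the preliminaries with the citation \cite{DworkMNS06} attached, so the ``paper's proof'' is simply an appeal to the literature. There is therefore nothing to compare against beyond observing that what you wrote is exactly the classical argument one would cite --- the pointwise density-ratio bound for Laplace, and the privacy-loss random variable plus Gaussian tail bound for the Gaussian mechanism, followed by the $S_1$/$S\setminus S_1$ split to pass from a tail bound on $L$ to $(\epsilon,\delta)$-DP.

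One minor remark: the displayed statement contains a typo (the second equation should define $\sigma$, not $\delta$), which you implicitly corrected by writing ``pick $\sigma$ matching the stated formula''. Your instinct to cite \cite{DworkMNS06} for the precise constant in the Gaussian calibration rather than reproduce the bookkeeping is exactly what the paper itself does.
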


In this paper, the random variables associated with privacy mechanisms are usually continuous. Density functions determine continuous distributions. If a continuous distribution is calculated conditionally on some information, then the density is
called a \emph{conditional density} \cite{Applebaum1996probability}. When the conditioning information involves
another random variable with a continuous distribution, the conditional density can be calculated from the joint density for the two random variables. Suppose that two random variables have a joint continuous probability distribution with joint density function $p_{X,Y}(x, y)$ and $p_Y(y)$ is the density function of $Y$, then  the conditional density  of the distribution of the random variable $X$ for fixed values $y$ of $Y$ is defined as follows:
	\begin{align}
		p_X( x | Y=y) = & \frac{p_{X, Y}(x,y)}{p_Y(y)}.
	\end{align}
It is easy to see that, in this case, conditioning does not change the relative densities.

\section{Two Views of Constrained DP}\label{sec:two-views}

From the data curator’s perspective, in addition to privacy
concerns, there often exists external constraints that the privatized output $M$ must meet.  These constraints can often be represented as a predicate of $M(D)$ that agrees with what is calculated based on the confidential $f(D)$. 

\begin{definition} \label{def:invariant} Given a deterministic query $f: \mathcal{D}\rightarrow \mathbb{R}^n$ and a privacy mechanism $M: \mathcal{D}\rightarrow \mathbb{R}^n$, we call a \emph{convex} (and hence Lebesgue measurable) subset  $C\subseteq \mathbb{R}^n$ an \emph{invariant} if, for any $D \in \mathcal{D}$, $M(D) \in C \Leftrightarrow  f(D)\in C$ with probability one over the randomness of $M$. 
\end{definition}

Our definition of invariant is \emph{independent} of the original dataset $D$ and hence is essentially different from that in \cite{GongM20}.  The invariants defined there depend on the original dataset.  We have not found  in the literature yet any practical scenarios with such a dependent invariant. 
Usually the invariants are represented by a group of linear equalities or inequalities. In other words,  $C= \{\bold{z}\in \mathbb{R}^n: A\bold{z} = \bold{b}\}$ or $C= \{\bold{z}\in \mathbb{R}^n: A \bold{z} \geq \bold{b}\}$ for some matrix $A$ and vector $\bold{b}$. \emph{Constrained DP} (CDP for short) refers to differential privacy (or differential private mechanism) satisfying some invariant. 

\subsection{Constrained DP as Belief Revision} 

For a given dataset $D\in \mathbb{R}^n$, $M(D)$ is a random variable. Let $P_{M(D)}$ and $p_{M(D)}$ be the corresponding  probability distribution and density function over $\mathbb{R}^n$. 
Now we give a definition of \emph{constrained DP} by employing the technique of belief revision to the probability distribution and density function associated with the random variable $M(D)$.   Given an invariant $C$, we construct the conditional random variable $M(D)|C$ in two cases. 
\begin{enumerate}
	\item Case 1: $P_{M(D)}(C) >0$.  For example, if $C$ is represented by a group of linear inequalities, then usually $P_{M(D)}(C) >0$. Now we define the conditional random variable $M(D)|C$ (and its probability density function $p_{M(D) |C})$. 
If $u\in C$,  then $p_{M(D)} (u)= \frac{p_{M(D)(u)}}{P_{M(D)}(C)}$; otherwise,  $p_{M(D) |C}(u)=0$.

	\item Case 2:  $P_{M(D)}(C) = 0$.    Here we consider as an illustration a simple case when $M$  is additive and the invariant $C$ is defined by a group of linear inequalities $A\bold{z} = \bold{b}$ where $A$ is a $(n'\times n)$ matrix ($n' < n$) and $\bold{b}$ is a $(n'\times 1)$ column vector. It follows  $P_{M(D)}(C) = 0$. Let $M(D) = f(D) + U$ where $f$ is a deterministic query and $U$ is a random vector $(U_1, U_2, \cdots, U_n)$ with probability density function $p_U$.  Since $Af(D) = AM(D)$ with probability 1, $A U =0$ with probability 1. So, in this case, the randomness in $M(D)$ comes from the random vector $U$ and hence is independent of the original dataset $D$.  Without loss of generality, we assume that the rank of $A$ is $n'$, i.e., $A$ is of full rank and, by solving the group of linear equations $AU=0$ of unknowns $U_1, U_1, \cdots, U_n$, we get 
$$ \left\{
\begin{array}{lr}
U_{n'+1}= U_{n'+1}(U_1, U_2, \cdots, U_{n'})\\
U_{n'+2}= U_{n'+2} (U_1, U_2, \cdots, U_{n'})\\
\cdots \\
U_{n} = U_n (U_1, U_2, \cdots, U_{n'})
\end{array}
\right. $$
In other words, $U_1, \cdots, U_{n'}$ are the $n'$ free variables.  Now we define the conditional random variable $M(D) |C$ and its probability density function $p_{M(D)|C}$.  If $M(D) = f(D)+ u = f(D)+(u_1, \cdots, u_n) \in C$, then $p_{M(D)|C} (f(D)+ (u_1, \cdots, u_n)) = 	p_U(u)=\\
		 \frac{p_U(u_1, \cdots, u_{n'}, U_{n'+1}(u_1, \cdots, u_{n'}), \cdots, U_{n}(u_1, \cdots, u_{n'}))}{\int_{\mathbb{R}^{n'}} p_U(u_1, \cdots, u_{n'}, U_{n'+1}(u_1, \cdots, u_{n'}), \cdots, U_{n}(u_1, \cdots, u_{n'})) du_1\cdots d_{u_{n'}}}$ (let $K_C$ denotes the denominator); if $M(D) = f(D)+ u \not\in C$, then $p_{M(D)|C} (f(D)+u)=0$.  In summary, if $(v_1, v_2, \cdots, v_n)\in C$, then $p_{M(D)|C} (v_1, v_2, \cdots, v_n) = \frac{p_{M(D)}(v_1, \cdots, v_n)}{K_C}$; if $(v_1, \cdots, v_n)\not\in C$, then $p_{M(D)|C}(v_1, \cdots, v_n) \\=0$.  Note that $K_C$ depends only on $C$ and the noise-adding random vector $(U_1, \cdots, U_n)$.

\end{enumerate}

\begin{definition}
For a privacy mechanism $M: \mathcal{D}\rightarrow \mathbb{R}^n$ and an invariant $C\subseteq \mathbb{R}^n$, we define the \emph{constrained privacy mechanism} $M(\cdot| C)$ satisfying the invariant $C$ as belief revision according to probabilistic conditioning as follows:
	\begin{align}
		M(\cdot | C) (D): = & M(D) | C
	\end{align}
For short, we call $M(\cdot |C)$ a \emph{conditional privacy mechanism} on  the invariant $C$.  

\end{definition}
Congenial DP under mandated disclosure considered in \cite{GongM20} is our conditional DP for the first case, i.e.,  $M(D)(C) >0$. In this case, 
define $c_{D} = Pr[M(D) \in C]$ and $c_{D'} = Pr[M(D')\in C]$. Set $\gamma = \frac{1}{\epsilon} \max_{D \sim D'}\log\frac{c_{D}}{c_{D'}}$.  From a similar argument to Theorem 2.1 in \cite{GongM20}, we know that, in this case, if $M$ is $\epsilon$-differentially private, then $M(\cdot |C)$ is $(1+\gamma)\epsilon$-differentially private for some $\gamma\in [-1, 1]$. So the conditioning may incur an additional privacy loss with a factor $\gamma$. 
The following proposition shows a similar proposition for the second case when $M(D)(C) =0$ but with $\gamma=0$. 

\begin{lemma} \label{lem: conditioning-privacy-preserving}  Let $M$ be additive and $C$ be represented by a group of linear equalities $A\bold{z}=\bold{b}$ as above.  For an invariant $C\subseteq \mathbb{R}^n$, if $M: \mathcal{D}\rightarrow \mathbb{R}^n$ is $\epsilon$-differentially private, then $M(\cdot| C)$ is also $\epsilon$-differentially private. 
\end{lemma}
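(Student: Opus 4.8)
The plan is to push the entire argument through a single structural fact that is already built into the definition of $p_{M(D)|C}$ given just before the statement: the normalizer $K_C$ depends only on the invariant $C$ and on the noise vector $U=(U_1,\dots,U_n)$, and \emph{not} on the dataset $D$. Once that is isolated, conditioning on $C$ rescales the densities of $M(D)$ and of $M(D')$ by one and the same factor $1/K_C$, so the pointwise ratio between the two conditional densities equals the ratio between the two unconditional ones, and the $\epsilon$-DP inequality transfers for free.

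Concretely, I would first rewrite the hypothesis in density form: since $M$ is additive, $p_{M(D)}(v)=p_U(v-f(D))$, and because the additive noise has a strictly positive continuous density on $\mathbb{R}^n$ (as for Laplace with scale $\lambda=\Delta_1(f)/\epsilon$), the hypothesis ``$M$ is $\epsilon$-DP'' is equivalent to the clean pointwise statement $p_{M(D)}(v)\le e^{\epsilon}p_{M(D')}(v)$ for every $v\in\mathbb{R}^n$ and every $D\sim D'$ — in particular for every $v\in C$. Next I would argue that $K_C$ is data-independent: since $C=\{z:Az=b\}$ is an invariant, the confidential values satisfy it, i.e.\ $Af(D)=Af(D')=b$, so conditioning $M(D)$ on $C$ amounts to conditioning the noise on the fixed linear subspace $\{u:Au=0\}$ (the right-hand side $b-Af(D)$ vanishes regardless of $D$); hence the denominator in $p_{M(D)|C}$ — the integral of $p_U$ over the free coordinates — is literally the expression written before the statement, which contains no occurrence of $f(D)$. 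Therefore, on $C$ we have $p_{M(D)|C}(v)=p_{M(D)}(v)/K_C$ and $p_{M(D')|C}(v)=p_{M(D')}(v)/K_C$ with the \emph{same} $K_C$. Finally, for any measurable $S\subseteq\mathbb{R}^n$, I would integrate the conditional densities over $S\cap C$ against the intrinsic Lebesgue measure of the affine subspace $C$ (in the free-coordinate parametrisation used before the statement) and combine the two previous points:
\[
Pr[M(\cdot|C)(D)\in S]\;=\;\int_{S\cap C}\frac{p_{M(D)}(v)}{K_C}\,dv\;\le\;e^{\epsilon}\int_{S\cap C}\frac{p_{M(D')}(v)}{K_C}\,dv\;=\;e^{\epsilon}\,Pr[M(\cdot|C)(D')\in S],
\]
which is exactly $\epsilon$-DP for $M(\cdot|C)$.

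I expect the only delicate point to be the middle step: making the change of variables onto the lower-dimensional set $C$ precise enough that ``$K_C$ is the same for $D$ and $D'$'' is visibly a fact about the noise $U$ alone, and reconciling the set-based DP definition with the density-ratio version on a measure-zero set like $C$ (one works throughout on $C$ with its own Lebesgue measure). This is where both hypotheses are genuinely used — additivity, so that the randomness on $C$ is a translate of the noise, and the \emph{linear-equality} shape of $C$, so that the conditioning event for the noise is a fixed subspace rather than a $D$-dependent slice; for linear \emph{inequalities} (the $P_{M(D)}(C)>0$ case discussed above) or a non-additive $M$ the normalizer would move with $D$ and the correction factor $\gamma$ would reappear. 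Everything after that is the one-line cancellation of $K_C$ and a monotone integration, so I would not belabour it.
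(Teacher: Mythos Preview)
Your proposal is correct and follows essentially the same approach as the paper: the paper's proof also reduces to the observation that the normalizer $K_C$ is the same for $D$ and $D'$, so the pointwise density ratio $p_{M(D)|C}(v)/p_{M(D')|C}(v)$ on $C$ equals the unconditional ratio $p_{M(D)}(v)/p_{M(D')}(v)$, which is bounded between $e^{-\epsilon}$ and $e^{\epsilon}$ by hypothesis, and then integrates. Your write-up is in fact more explicit than the paper's about \emph{why} $K_C$ is data-independent (via $Af(D)=b$ forcing the noise event $Au=0$ regardless of $D$) and about working with the intrinsic measure on the affine subspace, but the logical skeleton is identical.
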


\begin{proof} Assume that $M$ is additive and $C$ is defined by $A\bold{z} = \bold{b}$.  For any two neighbouring datasets $D$ and $D'$, since $M$ is $\epsilon$-differentially private, 
$ e^{-\epsilon} \leq \frac{p_{M(D)}(v_1, \cdots, v_n)}{p_{M(D')}(v_1, \cdots, v_n)} \leq e^{\epsilon}$
for any $(v_1, \cdots, v_n)\in \mathbb{R}^n$.  For any $(v_1, \cdots, v_n)\in C$, $p_{M(D)|C}(v_1, \cdots, v_n) = \frac{p_{M(D)}(v_1, \cdots, v_n)}{K_C}$ and $p_{M(D')|C}(v_1, \cdots, v_n) = \frac{p_{M(D')}(v_1, \cdots, v_n)}{K_C}$. So, it follows that, for any $(v_1, \cdots, v_n)\in C$, 
$ e^{-\epsilon} \leq \frac{p_{M(D)|C}(v_1, \cdots, v_n)}{p_{M(D')|C}(v_1, \cdots, v_n)} \leq e^{\epsilon}$. 
This implies that $e^{-\epsilon \leq }\frac{Pr[M(D|C)\in E]}{Pr[M(D'|C)\in E]} \leq e^{\epsilon}$ for any $E \subseteq C$.  We have shown that $M(\cdot |C)$ is $\epsilon$-differentially private.

\end{proof}

\begin{lemma} \label{lem:postprocessing-conditioning}   Let $M(\cdot | C)$ be a $(\epsilon, \delta)$ conditional differential private for some privacy mechanism $M: \mathcal{D}\rightarrow \mathbb{R}^n$ and invariant $C$.  If $h: \mathbb{R}^n\rightarrow \mathbb{R}^{n'}$ is measurable, then 
	\begin{enumerate}
		\item $h \circ M(\cdot |C)$ is also $(\epsilon, \delta)$- differentially private. 
		\item  $(h\circ M)(\cdot|C)$ is also $(\epsilon,\delta)$-differentially private, and $h\circ (M(\cdot |C))= (h\circ M)(\cdot|C)$. 
	\end{enumerate}
\end{lemma}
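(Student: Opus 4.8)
The plan is to derive both parts from the standard post-processing principle for differential privacy, combined with the elementary fact that conditioning on an event commutes with a deterministic measurable map. \emph{Part 1.} Fix neighbouring datasets $D\sim D'$ and a measurable set $T\subseteq\mathbb{R}^{n'}$. Since $h$ is measurable, $h^{-1}(T)$ is a measurable subset of $\mathbb{R}^n$, and by definition of the image random variable $Pr[h(M(D|C))\in T]=Pr[M(D|C)\in h^{-1}(T)]$. Applying the $(\epsilon,\delta)$-differential privacy of $M(\cdot|C)$ to the set $h^{-1}(T)$ gives $Pr[M(D|C)\in h^{-1}(T)]\le e^{\epsilon}\,Pr[M(D'|C)\in h^{-1}(T)]+\delta$, and the right-hand side equals $e^{\epsilon}\,Pr[h(M(D'|C))\in T]+\delta$. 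Since $T$ and the neighbouring pair were arbitrary, $h\circ M(\cdot|C)$ is $(\epsilon,\delta)$-differentially private; this argument is insensitive to which of the two cases in the construction of $M(D)|C$ occurs.

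\emph{Part 2.} First I would fix the reading of $(h\circ M)(\cdot|C)$: since $C\subseteq\mathbb{R}^n$ is an invariant for $M$ (and not, in general, a subset of $\mathbb{R}^{n'}$), the conditioning event is $\{M(D)\in C\}$, so $(h\circ M)(\cdot|C)(D)$ denotes the law of $h(M(D))$ obtained by the same recipe as in the Definition but with the conditional density $p_{M(D)|C}$ pushed forward through $h$. With this reading, the identity $h\circ(M(\cdot|C))=(h\circ M)(\cdot|C)$ is an equality of distributions on $\mathbb{R}^{n'}$, which I would verify following the two-case construction. In Case 1 ($P_{M(D)}(C)>0$) it is a one-line conditional-probability computation: for every measurable $T\subseteq\mathbb{R}^{n'}$,
\begin{align*}
Pr[h(M(D))\in T\mid M(D)\in C] &= \frac{Pr[M(D)\in h^{-1}(T)\cap C]}{Pr[M(D)\in C]}\\
&= Pr[M(D|C)\in h^{-1}(T)] = Pr[h(M(D|C))\in T].
\end{align*}
In Case 2 ($P_{M(D)}(C)=0$, with $C=\{z:Az=b\}$ and $M$ additive) the conditional distribution of $M(D)|C$ is, by construction, carried by the free coordinates $U_1,\dots,U_{n'}$ with density $p_U(\cdot)/K_C$, and each point of $C$ is a fixed measurable function of those coordinates; pushing this measure forward by $h$ therefore agrees with first reducing $h\circ M$ to the free coordinates and then applying $h$, which is exactly $(h\circ M)(D|C)$. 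Granting the identity, the $(\epsilon,\delta)$-differential privacy of $(h\circ M)(\cdot|C)$ is immediate, since it then equals $h\circ(M(\cdot|C))$, which is $(\epsilon,\delta)$-differentially private by Part 1.

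\emph{Main obstacle.} Part 1 and the Case 1 computation are routine; the delicate point is Case 2, because $p_{M(D)|C}$ is a density on a Lebesgue-null affine subspace and is a priori defined only relative to the chosen parametrization by the free variables. The real content is to show that ``push forward through $h$'' and ``condition $h\circ M$ on $\{M(D)\in C\}$'' produce the same probability measure on $\mathbb{R}^{n'}$ independently of that coordinate split — equivalently, that the conditional law of the random element $h(M(D))$ given the null event $\{M(D)\in C\}$ is the $h$-image of the conditional law of $M(D)$ given the same event. This is a standard property of regular conditional distributions, but in the present setting I would check it explicitly via the change-of-variables computation already used to define the Case 2 conditional density, so that the final statement does not covertly depend on which coordinates were chosen as free.
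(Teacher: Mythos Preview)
Your proposal is correct and follows essentially the same line as the paper: Part~1 is exactly the preimage observation $Pr[h(M(D|C))\in T]=Pr[M(D|C)\in h^{-1}(T)]$, and Part~2 in the paper is reduced to the single displayed ratio identity $\frac{Pr[(h\circ M)(\cdot|C)(D)\in B]}{Pr[(h\circ M)(\cdot|C)(D')\in B]}=\frac{Pr[M(\cdot|C)(D)\in h^{-1}(B)]}{Pr[M(\cdot|C)(D')\in h^{-1}(B)]}$, which is precisely the distributional identity you verify. Your Case~1/Case~2 split and the discussion of the null-set parametrization are more careful than the paper's one-line assertion, but no different in substance.
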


\begin{proof} The proof of the first part follows from the observation that $h\circ M(D | C) \in Z $ iff $M(D | C) \in h^{-1}(Z)$ for any measurable $Z\subseteq \mathbb{R}^{n'}$.  And the second part follows from the fact 
\begin{align}
		\frac{Pr[(h\circ M)(\cdot|C) (D)\in B]}{Pr[(h\circ M)(\cdot|C) (D')\in B]} = & \frac{Pr[ M(\cdot|C)) (D)\in h^{-1}(B)]}{Pr[ M(\cdot|C) (D')\in h^{-1}(B)]} \nonumber
	\end{align} 
\end{proof}



The following two propositions are about composition. The first one comes from the characteristic property in Lemma \ref{lem:convex-combination} about conditioning.  It is a \emph{new} composition property.  It tells us that, constrained DP mechanism by conditioning on the disjoint union of two invariants can be obtained by the convex combination of constrained DP on these two individual invariants. 

\begin{theorem} \label{thm:disjoint-union} (Disjoint-union Composition) Let $C$ and $C'$ be two invariants such that $C\cap C' = \emptyset$. We have 
	\begin{enumerate}
		\item The conditional privacy mechanism on $C\cup C'$ is a convex combination of the conditional privacy mechanisms on $C$ and on $C'$, i.e., for any $D\in \mathcal{D}, Pr[M(D | C\cup C') \in B] = \lambda  Pr[M(D |C)\in B] + (1-\lambda) Pr[M(D |C')\in B]$ for some $\lambda \in [0,1]$;
		\item If the conditional privacy mechanism $M( \cdot|C)$ and $M(\cdot | C')$ are both $(\epsilon,\delta)$-differential private, then $M(\cdot | C\cup C')$ is $(\epsilon, \delta)$-differentially private. 
	\end{enumerate}
\end{theorem}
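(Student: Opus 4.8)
The plan is to establish the first claim as the continuous, measure-theoretic counterpart of Proposition~\ref{lem:convex-combination}, and then to deduce the second claim from the first by the standard mixture argument for the $(\epsilon,\delta)$-DP inequality. Fix $D\in\mathcal{D}$ and a measurable $B\subseteq\mathbb{R}^n$, and split into the two cases of the conditioning construction. When $P_{M(D)}(C\cup C')>0$, the conditional density of $M(D)\mid(C\cup C')$ is $p_{M(D)}(u)/P_{M(D)}(C\cup C')$ on $C\cup C'$ and $0$ elsewhere; disjointness gives $P_{M(D)}(C\cup C')=P_{M(D)}(C)+P_{M(D)}(C')$, and splitting $B\cap(C\cup C')=(B\cap C)\sqcup(B\cap C')$ yields
\begin{align}
Pr[M(D\mid C\cup C')\in B]=\frac{P_{M(D)}(B\cap C)+P_{M(D)}(B\cap C')}{P_{M(D)}(C)+P_{M(D)}(C')}.\nonumber
\end{align}
Since $M(D\mid C)$ is supported on $C$, $P_{M(D)}(B\cap C)=P_{M(D)}(C)\cdot Pr[M(D\mid C)\in B]$, and likewise for $C'$; substituting gives the asserted identity with $\lambda=P_{M(D)}(C)/\bigl(P_{M(D)}(C)+P_{M(D)}(C')\bigr)\in[0,1]$. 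When $P_{M(D)}(C\cup C')=0$ (the additive-mechanism, linear-equality case), the same computation applies verbatim with each $P_{M(D)}(\cdot)$ replaced by the corresponding normalization constants $K_C$, $K_{C'}$, $K_{C\cup C'}$ from the construction, with $K_{C\cup C'}=K_C+K_{C'}$ by linearity of the defining integral; here $\lambda=K_C/(K_C+K_{C'})$, which is independent of $D$.

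For part~2, apply part~1 to both $D$ and $D'$, writing $\lambda_D,\lambda_{D'}$ for the respective mixing weights. Using that $M(\cdot\mid C)$ and $M(\cdot\mid C')$ are each $(\epsilon,\delta)$-differentially private,
\begin{align}
Pr[M(D\mid C\cup C')\in B]&\leq\lambda_D\bigl(e^{\epsilon}Pr[M(D'\mid C)\in B]+\delta\bigr)+(1-\lambda_D)\bigl(e^{\epsilon}Pr[M(D'\mid C')\in B]+\delta\bigr)\nonumber\\
&=e^{\epsilon}\bigl(\lambda_D Pr[M(D'\mid C)\in B]+(1-\lambda_D)Pr[M(D'\mid C')\in B]\bigr)+\delta.\nonumber
\end{align}
If $\lambda_D=\lambda_{D'}$, the bracketed term is exactly $Pr[M(D'\mid C\cup C')\in B]$ by part~1 and the conclusion follows; this holds automatically whenever the mixing weight is data-independent, in particular in the linear-equality case where $\lambda=K_C/(K_C+K_{C'})$, and more generally whenever the ratio $P_{M(D)}(C)/P_{M(D)}(C')$ does not depend on $D$.

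The main obstacle is precisely the possible data-dependence of $\lambda$ in the positive-measure case: a $\lambda_D$-mixture of two $(\epsilon,\delta)$-DP output distributions need not satisfy the $(\epsilon,\delta)$-DP inequality against the $\lambda_{D'}$-mixture of the same two distributions when $\lambda_D\neq\lambda_{D'}$, the slack being governed by $|\lambda_D-\lambda_{D'}|$. I would handle this by either (i) stating part~2 under the hypothesis that $P_{M(D)}(C)/P_{M(D)}(C')$ is constant in $D$ (which covers the motivating census-style examples and the whole of the measure-zero case), or (ii) bounding $|\lambda_D-\lambda_{D'}|$ in terms of the privacy of the Bernoulli query $D\mapsto\mathbf{1}[M(D)\in C]$ and absorbing the extra term into an enlarged additive parameter $\delta'>\delta$, yielding a quantitative disjoint-union composition bound at a mild cost in $\delta$.
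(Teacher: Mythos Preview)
Your argument for part~1 is essentially the paper's: it invokes Proposition~\ref{lem:convex-combination} with $\lambda=P_{M(D)}(C)/\bigl(P_{M(D)}(C)+P_{M(D)}(C')\bigr)$ and remarks that this $\lambda$ is data-dependent. Your write-up is considerably more detailed than the paper's one-line appeal to the proposition, but the idea is the same.

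For part~2 you have been more careful than the paper. The paper's proof is simply ``the second follows from the first part,'' followed by the observation that $\lambda$ depends on $D$; it does not explain how the data-dependence of $\lambda$ is reconciled with the mixture bound. The obstacle you flag is real: from part~1 and the $(\epsilon,\delta)$-DP of $M(\cdot\mid C)$ and $M(\cdot\mid C')$ one only gets
\[
Pr[M(D\mid C\cup C')\in B]\le e^{\epsilon}\bigl(\lambda_D\,Pr[M(D'\mid C)\in B]+(1-\lambda_D)\,Pr[M(D'\mid C')\in B]\bigr)+\delta,
\]
and the bracketed expression is $Pr[M(D'\mid C\cup C')\in B]$ only when $\lambda_D=\lambda_{D'}$. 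Without that, the conclusion can fail: take $C=[0,1]$, $C'=[2,3]$, and let $M(D)$ be uniform on $C$ with probability $p_D$ and uniform on $C'$ otherwise, where $p_D$ varies with $D$. Then $M(\cdot\mid C)$ and $M(\cdot\mid C')$ are both $(0,0)$-DP, yet $M(\cdot\mid C\cup C')=M(\cdot)$ is not. So the claim as stated needs exactly the kind of additional hypothesis you propose in option~(i), namely that $P_{M(D)}(C)/P_{M(D)}(C')$ is constant in $D$; this covers the measure-zero (linear-equality) case automatically via $\lambda=K_C/(K_C+K_{C'})$. Your option~(ii), absorbing $|\lambda_D-\lambda_{D'}|$ into an enlarged $\delta'$, is a sound way to salvage a quantitative version in the general positive-measure case. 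In short, your proposal does not miss anything the paper supplies; rather, it surfaces a gap the paper leaves unaddressed.
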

\begin{proof} The first part is from Lemma \ref{lem:convex-combination} with $\lambda = \frac{P_{M(D)}(C)}{P_{M(D)}(C)+ P_{M(D)}(C')}$ and the second follows from the first part. Note that $\lambda$ here depends on $D$ and hence is data-dependent. 
\end{proof}

\begin{lemma} \label{lem:composition-conditioning} Given $\epsilon_1 \geq 0$ and $\epsilon_2 > 0$, if $M_1(\cdot | C_1)$ is $\epsilon_1$-differentially private and $M_2(\cdot | C_2)$ is $\epsilon_2$-differentially private, then $(M_1, M_2)(\cdot| C_{12})$ such that $C_{12}= (C_1, C_2)$ is $(\epsilon_1+\epsilon_2)$-differentially private. 
\end{lemma}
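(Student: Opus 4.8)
The plan is to reduce the statement to the pointwise density-ratio characterization of pure differential privacy, combined with the observation that conditioning on a product invariant factorizes. First I would make explicit the standing assumption that $M_1$ and $M_2$ use independent randomness (as in all composition results), so that for every dataset $D$ the joint density satisfies $p_{(M_1,M_2)(D)}(y_1,y_2) = p_{M_1(D)}(y_1)\,p_{M_2(D)}(y_2)$. Next I would check that $C_{12} = C_1\times C_2$ is indeed an invariant for the joint mechanism $(M_1,M_2)$ relative to the joint query $(f_1,f_2)$: since $(M_1,M_2)(D)\in C_{12}$ iff $M_1(D)\in C_1$ and $M_2(D)\in C_2$, and each $M_i(D)\in C_i \Leftrightarrow f_i(D)\in C_i$ with probability one, the equivalence $(M_1,M_2)(D)\in C_{12}\Leftrightarrow (f_1,f_2)(D)\in C_{12}$ holds with probability one.

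The key step inside the proof is that conditioning commutes with the product: $p_{(M_1,M_2)(D)\,|\,C_{12}}(y_1,y_2) = p_{M_1(D)|C_1}(y_1)\,p_{M_2(D)|C_2}(y_2)$. In the positive-measure case (Case 1) this is immediate, since the normalizing constant of the joint conditional density is $P_{(M_1,M_2)(D)}(C_{12}) = P_{M_1(D)}(C_1)\,P_{M_2(D)}(C_2)$ by independence. In the measure-zero case (Case 2), where each $C_i = \{z : A_i z = b_i\}$, the joint invariant is cut out by the block-diagonal system $\mathrm{diag}(A_1,A_2)z = (b_1,b_2)$; solving for the dependent coordinates decouples into solving the two subsystems separately, so the free variables of the joint system are exactly the union of the free variables of the two subsystems and the change-of-variables computation gives $K_{C_{12}} = K_{C_1} K_{C_2}$. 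Hence the factorization holds in both cases.

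With the factorization in hand, for any neighbouring datasets $D\sim D'$ and any $(y_1,y_2)\in C_{12}$ at which the denominators are positive,
\begin{align}
\frac{p_{(M_1,M_2)(D)|C_{12}}(y_1,y_2)}{p_{(M_1,M_2)(D')|C_{12}}(y_1,y_2)} = \frac{p_{M_1(D)|C_1}(y_1)}{p_{M_1(D')|C_1}(y_1)}\cdot\frac{p_{M_2(D)|C_2}(y_2)}{p_{M_2(D')|C_2}(y_2)} \leq e^{\epsilon_1} e^{\epsilon_2} = e^{\epsilon_1+\epsilon_2}, \nonumber
\end{align}
where each factor is bounded using that $M_1(\cdot|C_1)$ is $\epsilon_1$-DP and $M_2(\cdot|C_2)$ is $\epsilon_2$-DP, which for pure DP is equivalent to the corresponding pointwise density-ratio bound (exactly as used in the proof of Lemma~\ref{lem: conditioning-privacy-preserving}). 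Integrating this pointwise inequality over any measurable $E\subseteq C_{12}$ yields $Pr[(M_1,M_2)(D\,|\,C_{12})\in E] \leq e^{\epsilon_1+\epsilon_2}\,Pr[(M_1,M_2)(D'\,|\,C_{12})\in E]$, and symmetry in $D,D'$ completes the argument.

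I expect the main obstacle to be the measure-zero case of the factorization step: one must argue carefully that the parametrization by free variables used in Case~2 can be chosen to respect the product structure, i.e. that the dependent coordinates of block $i$ depend only on the free coordinates of block $i$, so that the Jacobian factors and the denominators $K_{C_1}, K_{C_2}$ multiply out cleanly. Everything else — the independence assumption, the invariant check, and the final integration — is routine, being essentially the standard DP composition argument transported through the conditioning operation.
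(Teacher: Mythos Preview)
The paper states this lemma without proof, so there is nothing to compare against directly. Your argument is correct and is the natural one: once you have the factorization $p_{(M_1,M_2)(D)\mid C_{12}} = p_{M_1(D)\mid C_1}\cdot p_{M_2(D)\mid C_2}$, the result is just the standard basic composition computation for pure DP at the level of density ratios, exactly as in the proof of Lemma~\ref{lem: conditioning-privacy-preserving}.

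Your identification of the only nontrivial point is also right: in Case~2 the factorization relies on the block-diagonal structure of $\mathrm{diag}(A_1,A_2)$, which lets you choose the free variables of the joint system to be the disjoint union of the free variables in each block, so that the change-of-variables integral splits as $K_{C_{12}} = K_{C_1}K_{C_2}$. One small addition you might make explicit is the mixed case where one $C_i$ has positive measure and the other is a linear-equality invariant; nothing new happens (the joint normalizing constant is still the product $P_{M_1(D)}(C_1)\cdot K_{C_2}$, respectively $K_{C_1}\cdot P_{M_2(D)}(C_2)$), but spelling it out closes the case analysis.
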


The probability distribution $P'= P_{M(D|C)}$ of the conditional privacy mechanism minimizes the $KL$-divergence $I(P', P_{M(D)})$ with the requirement that $P'(C)=1$.



\subsection{Constrained DP as Belief Update}
Now we define how to update a privacy mechanism $M:\mathcal{D}\rightarrow \mathbb{R}^n$ with an invariant $C$ according to the following probabilistic imaging rule of belief update:

	\begin{align}
		\bar{\bold{y}} = & \arg\min_{\bold{y}\in C} \|\bold{y}-\tilde{\bold{y}}\|_2
	\end{align}
where $\tilde{\bold{y}}$ denotes the noisy output of the mechanism $M$, i.e., $M(D)= \tilde{\bold{y}}$.  So $\bar{\bold{y}}$ is the ``closest world" in the invariant $C$ from the noisy $\tilde{\bold{y}}$. Let $f_{L_2}$ denote the deterministic function of \emph{postprocessing with  $L_2$ minimization}, i.e., $f_{L_2} (\tilde{\bold{y}}) = \bar{\bold{y}}$.  Let $p_{M(D)}$ and $p_{f_{L_2}M(D)}$ denote the corresponding probability function of the two random vectors $M(D)$ and $f_{L_2}( M(D))$, respectively. For $\bold{y}\in \mathbb{R}^n$, 
	\begin{align} 
		p_{f_{L_2}M(D)}(\bold{y}) = & \int_{f_{L_2}(\bold{y}')= \bold{y}} p_{M(D)}(\bold{y}') d\bold{y}'  \label{eq:imaging-privacy} 
	\end{align}
From the above Eq. (\ref{eq:imaging-privacy}), we see that $p_{f_{L_2}M(D)}$ is obtained from $p_{M(D)}$ according to probabilistic imaging in the sense of Eq. (\ref{eq:imaging}). Let $P_{M(D)}$ and $P_{f_{L_2}(M(D))}$ denote the corresponding probability measures of $p_{M(D)}$ and $p_{f_{L_2}(M(D))}$, respectively.  In other words, $P_{M(D)} (B) = Pr[M(D)\in B]$ and $P_{f_{L_2}(M(D))}(B) = Pr[f_{L_2} M(D)\in B]$.  From Eq, (\ref{eq:imaging-privacy}), we have $P_{f_{L_2}(M(D))} = (P_{M(D)})_C$. 
 Let $M_C$ denote the corresponding privacy mechanism updated according to the invariant $C$, i.e., $M_C(D) = f_{L_2}(M(D))$.  

Generally, when $M$ is $(\epsilon, \delta)$-differentially private, $f_{L_2}\circ M$ is not necessarily $(\epsilon,\delta)$-differentially private \cite{GongM20}.  However, in this paper, we consider only \emph{data-independent} invariants (Definition (\ref{def:invariant})).  So, $f_{L_2}$ as belief update does preserve privacy. 
 Indeed, for any $B\subseteq \mathbb{R}^n$, 
	\begin{align}
		\frac{Pr[M_C(D)\in B]}{Pr[M_C(D')\in B]} = & 	\frac{Pr[M_C(D)\in B \cap C]}{Pr[M_C(D')\in B\cap C]} \nonumber \\ 
	 = & \frac{Pr[M(D)\in f^{-1}_{L_2}(B \cap C)]}{Pr[M(D')\in f^{-1}_{L_2}(B\cap C)]} \nonumber
	\end{align}
So, if $M$ is $(\epsilon,\delta)$-DP, then so is $M_C$. 
In particular, when  $M$ is an additive privacy mechanism and the invariant can be represented by a group of linear equalities, $f_{L_2}$ as postprocessing  preserves privacy. 

\begin{lemma}  \label{lem:privacy-preserving-imaging} If $M$  is $(\epsilon,\delta)$-diferentially private, then $M_C$ is also $(\epsilon,\delta)$-differentially private. 
\end{lemma}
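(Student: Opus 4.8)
The plan is to treat $M_C$ as a \emph{deterministic postprocessing} of $M$ and then invoke the standard closure of differential privacy under postprocessing. The only point that needs care --- and the precise reason the analogous statement fails in \cite{GongM20} --- is that the postprocessing map must not depend on the dataset; here it does not, exactly because the invariant $C$ is data-independent (Definition \ref{def:invariant}).

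First I would check that $f_{L_2}$ is a genuine, well-defined, measurable function $\mathbb{R}^n \to \mathbb{R}^n$. Since $C$ is convex and, in the cases of interest, a nonempty closed set (e.g.\ a solution set of linear equalities or inequalities), the Euclidean projection $\arg\min_{\bold{y}\in C}\|\bold{y}-\tilde{\bold{y}}\|_2$ exists and is \emph{unique} for every $\tilde{\bold{y}}$ by the Hilbert-space projection theorem, so $f_{L_2}$ is single-valued; moreover it is nonexpansive ($1$-Lipschitz), hence continuous, hence Borel measurable. Consequently, for every Borel set $B\subseteq\mathbb{R}^n$ the preimage $f_{L_2}^{-1}(B)$ is Borel, and $f_{L_2}^{-1}(B)=f_{L_2}^{-1}(B\cap C)$ since the range of $f_{L_2}$ is contained in $C$. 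Then, for neighbouring $D\sim D'$ and any Borel $B$, I would write
\[
Pr[M_C(D)\in B] \;=\; Pr[f_{L_2}(M(D))\in B] \;=\; Pr[M(D)\in f_{L_2}^{-1}(B)],
\]
and likewise for $D'$; applying the $(\epsilon,\delta)$-DP inequality for $M$ to the measurable set $S:=f_{L_2}^{-1}(B)$ yields $Pr[M(D)\in S]\le e^{\epsilon}Pr[M(D')\in S]+\delta$, which is precisely $Pr[M_C(D)\in B]\le e^{\epsilon}Pr[M_C(D')\in B]+\delta$. As $B$ and the neighbouring pair were arbitrary, $M_C$ is $(\epsilon,\delta)$-differentially private. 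This is the computation already displayed just before the lemma; alternatively, using the identity $P_{f_{L_2}(M(D))}=(P_{M(D)})_C$ noted earlier, the whole argument can be phrased directly in terms of the imaging operator $(\cdot)_C$.

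The main obstacle is not the DP inequality itself, which is immediate, but (i) justifying the well-definedness and measurability of $f_{L_2}$ --- making precise that ``the closest world in $C$'' is a bona fide deterministic function, which is where closedness and convexity of $C$ enter --- and (ii) isolating conceptually where data-independence of $C$ is used: it is used in the very first equality $M_C(D)=f_{L_2}(M(D))$, where the \emph{same} map $f_{L_2}$ is applied for every $D$. If $C$ were allowed to depend on $D$, this map would become $f_{L_2}^{D}$, the chain of equalities would break, and the conclusion would genuinely fail, as observed in \cite{GongM20}.
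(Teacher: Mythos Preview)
Your proposal is correct and follows essentially the same route as the paper: the argument displayed immediately before the lemma is precisely the postprocessing computation you spell out, namely $Pr[M_C(D)\in B]=Pr[M(D)\in f_{L_2}^{-1}(B\cap C)]$ with $f_{L_2}$ a fixed, data-independent map. Your added care about existence, uniqueness, and measurability of the Euclidean projection onto a closed convex $C$, and your phrasing via the additive $(\epsilon,\delta)$ inequality rather than a ratio, are welcome refinements but do not change the approach.
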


\begin{lemma} (Postprocessing) \label{lem:postprocessing-imaging} If $M_C$ is $(\epsilon,\delta)$-differentially private, then, for any measuable  function $h$ from $\mathbb{R}^n$ to $\mathbb{R}^{n'}$, $(h\circ M)_C$ is also $(\epsilon,\delta)$-differentially private. 
\end{lemma}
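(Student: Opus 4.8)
The plan is to reduce the claim to the standard post-processing principle of differential privacy, in the same spirit as the proof of Lemma~\ref{lem:postprocessing-conditioning}(1). Here the imaging operation $(\cdot)_C$ is applied to $M$ \emph{before} the measurable map $h$, so $(h\circ M)_C$ denotes the mechanism $D\mapsto h\bigl(f_{L_2}(M(D))\bigr)=h(M_C(D))$; thus it suffices to show that post-composing the $(\epsilon,\delta)$-differentially private mechanism $M_C$ with a measurable function $h\colon\mathbb{R}^n\to\mathbb{R}^{n'}$ again yields an $(\epsilon,\delta)$-differentially private mechanism.

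First I would fix neighbouring datasets $D\sim D'$ and an arbitrary measurable set $Z\subseteq\mathbb{R}^{n'}$. Since $h$ is measurable, $h^{-1}(Z)$ is a measurable subset of $\mathbb{R}^n$, and the events $\{h(M_C(D))\in Z\}$ and $\{M_C(D)\in h^{-1}(Z)\}$ coincide. Applying the $(\epsilon,\delta)$-DP guarantee of $M_C$ to the set $h^{-1}(Z)$ then gives
\begin{align*}
Pr[(h\circ M)_C(D)\in Z] &= Pr[M_C(D)\in h^{-1}(Z)] \le e^{\epsilon}\,Pr[M_C(D')\in h^{-1}(Z)]+\delta\\
&= e^{\epsilon}\,Pr[(h\circ M)_C(D')\in Z]+\delta,
\end{align*}
and since $Z$ was an arbitrary measurable set this is exactly $(\epsilon,\delta)$-differential privacy of $(h\circ M)_C$. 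Equivalently, one may note that $M_C=f_{L_2}\circ M$, hence $(h\circ M)_C=(h\circ f_{L_2})\circ M$ is a deterministic post-processing of $M_C$ by the measurable map $h$; a composition of measurable maps is measurable, so no additional machinery is required.

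There is essentially no computational obstacle; the only point that needs care is the reading of the notation $(h\circ M)_C$. The argument works precisely because the $L_2$-projection is onto the invariant $C\subseteq\mathbb{R}^n$ living in the \emph{domain} of $h$, so imaging happens first and $h$ acts as a genuine post-processing afterwards — in particular $h$ never has to interact with $C$, and $n'$ may differ from $n$. Had the intended object been instead ``apply $h$ first, then image $h\circ M$ onto some invariant $C'\subseteq\mathbb{R}^{n'}$'', I would route the proof through Lemma~\ref{lem:privacy-preserving-imaging} applied to the mechanism $h\circ M$ rather than through the hypothesis on $M_C$; but I would proceed with the former reading, which is consistent with the statement as written and with the conditioning analogue in Lemma~\ref{lem:postprocessing-conditioning}.
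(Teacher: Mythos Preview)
Your proof is correct and follows the same approach as the paper: both reduce to the identity $Pr[(h\circ M)_C(D)\in B]=Pr[M_C(D)\in h^{-1}(B)]$, i.e.\ $(h\circ M)_C=h\circ M_C$, and then invoke the $(\epsilon,\delta)$-DP guarantee of $M_C$ on the preimage set. If anything, your version is slightly more careful, since you write out the additive-$\delta$ inequality explicitly whereas the paper records only the ratio form; your discussion of the notational reading of $(h\circ M)_C$ is also apt and matches the paper's intended interpretation.
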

\begin{proof} Note that the randomness comes not from $h$ but from $M$. So the lemma follows from the observation:
	\begin{align}
		\frac{Pr[(h\circ M)_C (D)\in B]}{Pr[(h\circ M)_C (D')\in B]} = & \frac{Pr[ M_C (D)\in h^{-1}(B)]}{Pr[ M_C (D')\in h^{-1}(B)]} \nonumber
	\end{align}
\end{proof}

The following composition property is \emph{new} which says that the constrained differential privacy as belief update is also preserved under convex combination.  A convex combination of the privacy mechanisms is such a randomized mechanism $M$ that it outputs the mechanism $M_i$ with probability $a_i$ where $\sum_i a_i =1$.  Formally, 
suppose that we have some privacy mechanisms $M_i: \mathcal{D}\rightarrow Y (1\leq i \leq n)$.   Now we define a new privacy mechanism $M$ such that $Pr[M = M_i]= a_i \geq 0 (1\leq i\leq n)$ and $a_1+a_2+ \cdots + a_n =1$. 

\begin{theorem} \label{thm:mixture} (Mixture Composition) Let $a_1, a_2, \cdots, a_n\geq 0$ such that $a_1 + a_2 +\cdots, a_n =1$ and $M$ be the above defined mixed mechanism.  We have that
\begin{enumerate}
	\item  $P_{M_C}(B) = a_1 P_{(M_1)_C} (B) + a_2 P_{(M_2)_C}(B) + \cdots + a_n P_{(M_n)_C}(B)$ for any $B\subseteq Y$;
	\item if $(M_1)_C, (M_2)_C, \cdots, (M_n)_C$ are $(\epsilon,\delta)$-differentially private, then $M_C$ is also $(\epsilon,\delta)$-differentially private. 
\end{enumerate}
\end{theorem}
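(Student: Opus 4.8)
The plan is to reduce both parts to the mixture-preserving (homomorphism) property of probabilistic imaging, Eq.~(\ref{eq:mixture-preserving}), together with the elementary observation that, for every dataset $D$, the mixed mechanism $M$ induces the distribution $P_{M(D)} = \sum_{i} a_i P_{M_i(D)}$. First I would justify this last identity by conditioning on the internal coin that selects which $M_i$ to run: by the law of total probability, $Pr[M(D)\in B] = \sum_i Pr[M=M_i]\,Pr[M_i(D)\in B] = \sum_i a_i P_{M_i(D)}(B)$ for every measurable $B$. Since each $M_i$ is one of the additive (Laplace/Gaussian) mechanisms, this mixture is again a continuous distribution (a finite convex combination of densities), so the imaging operator $(\cdot)_C$, realized as the pushforward along the $l_2$-projection $f_{L_2}$ onto the \emph{convex} set $C$, is well defined on $P_{M(D)}$, and $P_{M_C(D)} = (P_{M(D)})_C$ by the relation recorded just after Eq.~(\ref{eq:imaging-privacy}).

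For part 1, I would first extend Eq.~(\ref{eq:mixture-preserving}) from two summands to $n$ summands by a routine induction on $n$: if $a_1=1$ the claim is trivial, and if $a_1<1$ one writes $\sum_{i\le n} a_i P_i = a_1 P_1 + (1-a_1)\bigl(\sum_{i\ge 2}\tfrac{a_i}{1-a_1}P_i\bigr)$, applies the two-term homomorphism, and invokes the induction hypothesis on the inner convex combination. Then
\[
P_{M_C(D)} = \Bigl(\sum_i a_i P_{M_i(D)}\Bigr)_C = \sum_i a_i\bigl(P_{M_i(D)}\bigr)_C = \sum_i a_i P_{(M_i)_C(D)},
\]
and evaluating at an arbitrary $B\subseteq Y$ gives the stated identity.

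For part 2, I would fix neighbouring datasets $D\sim D'$ and a measurable $B\subseteq Y$, and plug the mixture identity from part 1 into the $(\epsilon,\delta)$-inequality for each $(M_i)_C$:
\[
P_{M_C(D)}(B) = \sum_i a_i P_{(M_i)_C(D)}(B) \le \sum_i a_i\bigl(e^{\epsilon}P_{(M_i)_C(D')}(B)+\delta\bigr) = e^{\epsilon}P_{M_C(D')}(B)+\delta,
\]
using $\sum_i a_i = 1$ in the final step. The same computation with the roles of $D$ and $D'$ exchanged completes the argument, so $M_C$ is $(\epsilon,\delta)$-differentially private.

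The only real subtlety — the step I would treat most carefully — is the measure-theoretic bookkeeping behind ``$(\cdot)_C$ commutes with finite mixtures'' in the continuous setting, since Eq.~(\ref{eq:mixture-preserving}) was stated for finite $\Omega$ via Eq.~(\ref{eq:imaging}). I would re-derive it here directly for pushforwards, noting that $(\nu)_C(B) = \nu\bigl(f_{L_2}^{-1}(B)\bigr)$ and that the map $\nu\mapsto \nu\bigl(f_{L_2}^{-1}(B)\bigr)$ is linear in $\nu$, which makes the commutation with convex combinations immediate; this also cleanly covers the case $P_{M(D)}(C)=0$, because imaging (unlike conditioning) never renormalizes by the mass of $C$. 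Everything else is routine.
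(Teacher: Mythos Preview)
Your proposal is correct and follows the same route as the paper, which simply cites the mixture-preserving property Eq.~(\ref{eq:mixture-preserving}) for part~1 and says part~2 ``is immediate from the first.'' You have filled in the details the paper omits --- the law-of-total-probability identity $P_{M(D)}=\sum_i a_i P_{M_i(D)}$, the induction from two to $n$ summands, the explicit $(\epsilon,\delta)$ bound, and, most usefully, the pushforward linearity argument $(\nu)_C(B)=\nu(f_{L_2}^{-1}(B))$ that justifies Eq.~(\ref{eq:mixture-preserving}) in the continuous setting rather than just the finite-$\Omega$ case where it was originally stated.
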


\begin{proof} The proof of the first part follows directly from the characterizing Eq. (\ref{eq:mixture-preserving}) for probabilistic update. And the second part is immediate from the first. 
\end{proof}

The theorem implies that, if we want to find the imaging of the mixture of some privacy mechanisms,  we can find the imaging of those privacy mechanisms first, which may be much easier,  and then get their mixture. It is interesting to note that constrained DP as belief update satisfies the Axiom of Choice in \cite{Kifer2010}.

\begin{lemma} (Basic Composition) \label{lem:composition-imaging} Given $\epsilon_1 > 0$ and $\epsilon_2 >0$, if $M_1$ is $\epsilon_1$-differentially private with invariant set $C_1$ and $M_2$ is $\epsilon_2$-differentially private with invariant set $C_2$, then the joint mechanism $(M_1, M_2)$ is $(\epsilon_1+ \epsilon_2)$-differentially private for the invariant set $C_{12}:= (C_1, C_2)$. 
\end{lemma}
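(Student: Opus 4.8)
The plan is to reduce Lemma \ref{lem:composition-imaging} to two facts already available: the basic composition theorem for ordinary (unconstrained) differential privacy, and the privacy-preservation of imaging established in Lemma \ref{lem:privacy-preserving-imaging}. Write $M_1:\mathcal{D}\to\mathbb{R}^{n_1}$ and $M_2:\mathcal{D}\to\mathbb{R}^{n_2}$ with invariants $C_1\subseteq\mathbb{R}^{n_1}$, $C_2\subseteq\mathbb{R}^{n_2}$, and set $C_{12}=C_1\times C_2\subseteq\mathbb{R}^{n_1+n_2}$. Since $C_1$ and $C_2$ are convex, so is $C_{12}$, and one checks directly from Definition \ref{def:invariant} that $C_{12}$ is a data-independent invariant for the joint query $(f_1,f_2)$ and the joint mechanism $(M_1,M_2)$: $(M_1(D),M_2(D))\in C_1\times C_2$ iff $M_1(D)\in C_1$ and $M_2(D)\in C_2$, which w.p.\ $1$ is iff $f_1(D)\in C_1$ and $f_2(D)\in C_2$.

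First I would record the coordinate-wise decomposition of imaging onto a product set. For a noisy output $(\tilde{\bold{y}}_1,\tilde{\bold{y}}_2)$ we have $\|(\bold{y}_1,\bold{y}_2)-(\tilde{\bold{y}}_1,\tilde{\bold{y}}_2)\|_2^2 = \|\bold{y}_1-\tilde{\bold{y}}_1\|_2^2 + \|\bold{y}_2-\tilde{\bold{y}}_2\|_2^2$, so minimizing over $(\bold{y}_1,\bold{y}_2)\in C_1\times C_2$ splits into two independent minimizations over $\bold{y}_1\in C_1$ and $\bold{y}_2\in C_2$; convexity of each $C_i$ guarantees uniqueness of each minimizer, hence $f_{L_2}$ for $C_{12}$ is the pair of the $L_2$-projections for $C_1$ and $C_2$. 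Consequently $(M_1,M_2)_{C_{12}}(D) = \big((M_1)_{C_1}(D),(M_2)_{C_2}(D)\big)$, which clarifies that ``$(M_1,M_2)_{C_{12}}$'' really is the joint mechanism of the two imaged mechanisms.

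Then I would finish in either of two equivalent ways, presenting one as the main line. Route A: by Lemma \ref{lem:privacy-preserving-imaging}, $(M_1)_{C_1}$ is $\epsilon_1$-DP and $(M_2)_{C_2}$ is $\epsilon_2$-DP; applying the basic composition theorem for unconstrained DP to the pair $\big((M_1)_{C_1},(M_2)_{C_2}\big)$, which by the previous paragraph equals $(M_1,M_2)_{C_{12}}$, gives $(\epsilon_1+\epsilon_2)$-DP. Route B: basic composition gives that $(M_1,M_2)$ is $(\epsilon_1+\epsilon_2)$-DP, and since $C_{12}$ is a data-independent invariant, Lemma \ref{lem:privacy-preserving-imaging} applied with $M=(M_1,M_2)$ and $C=C_{12}$ immediately yields that $(M_1,M_2)_{C_{12}}$ is $(\epsilon_1+\epsilon_2)$-DP.

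The only real content here is the coordinate-wise decomposition of the $\ell_2$ projection onto the product invariant — that the projection onto $C_1\times C_2$ is the pair of projections onto $C_1$ and $C_2$ — so that is the step I would be most careful to spell out (relying on the convexity already assumed in Definition \ref{def:invariant} for well-definedness); everything else is bookkeeping plus the two previously established facts. A minor point worth flagging is that basic composition presupposes $M_1$ and $M_2$ use independent internal randomness, which is the intended reading of ``the joint mechanism $(M_1,M_2)$.''
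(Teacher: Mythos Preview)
The paper does not actually supply a proof of Lemma \ref{lem:composition-imaging}; it is stated and then the text moves on to mechanism design. Your argument is correct and is the natural one the authors presumably have in mind: reduce to standard basic composition together with Lemma \ref{lem:privacy-preserving-imaging}. The coordinate-wise decomposition of the $\ell_2$ projection onto $C_1\times C_2$ is the only nontrivial step, and you justify it correctly via separability of the squared norm and convexity of each $C_i$ (as required by Definition \ref{def:invariant}), yielding $(M_1,M_2)_{C_{12}} = \big((M_1)_{C_1},(M_2)_{C_2}\big)$.

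One small observation: the lemma as phrased is slightly ambiguous about whether the hypothesis is that the unconstrained $M_i$ are $\epsilon_i$-DP or that the imaged $(M_i)_{C_i}$ are. Your Route B matches the first reading (compose, then image) and your Route A matches the second (image, then compose); since Lemma \ref{lem:privacy-preserving-imaging} makes the two hypotheses equivalent anyway, your presentation covers both, and flagging the independence assumption on the internal randomness of $M_1$ and $M_2$ is appropriate. There is nothing to compare against in the paper beyond this, so your write-up is a fine proof of the omitted lemma.
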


Now we describe a mechanism design for imaging. As usual, let $M(D)= f(D) + (U_1, \cdots, U_n)$. Then $M_C(D) = f(D) + \Pi_C (U_1, \cdots, U_n)$ where $\Pi_C(U_1, \cdots, U_n)$ is a random vector with the following density function: for $(u_1', \cdots, u_n')\in C$, 
$p_{\Pi_C(U_1, \cdots, U_n)} (u_1', \cdots, u_n') = \int_{S(u_1',\cdots, u_n') } p_{(U_1, \cdots, U_n)} \\ (u_1, u_2, \cdots, u_n) du_1\cdots du_n$ where $S(u_1', \cdots, u_n') = \{(u_1, \cdots, u_n): \Pi_C(u_1, \cdots, u_n)= (u_1', \cdots, u_n')\}$.  This mechanism design covers the projection mechanism for subspace DP \cite{Gao2022subspace}.

In most cases, conditioning and imaging are two different methods to achieve the constraint differential privacy. But, in the special case when the mechanism is spherical Gaussian mechanism (Definition \ref{def:Gaussian}) and invariant is represented by linear-equality constraint, then these two methods achieve the same results \cite{Gao2022subspace}. 


\section{Utility Analysis}


In this section,  we analyze the utility in two scenarios: one is a simple linear constraint; the other is hierarchical constraint.  It seems that, in both scenarios, conditioning achieves better in utility than imaging.  Here we consider the Laplace mechanism in Definition \ref{def:Laplace}. We first compare the utilities of conditioning and imaging on the linear constraint $C=\{(z_1, \cdots, z_n)\in \mathbb{R}^n: z_1+z_2+ \cdots + z_n =b\}$ for some real number $b$.  The following proposition (Theorem 12 from \cite{Zhu2021bias}) characterizes the variance of the marginal distribution of the postprocessed noise $f_{L_2}(\tilde{\bold{x}}) - \bold{x}$. 

\begin{proposition} \label{prop:variance-imaging} $Var (f_{L_2}(\tilde{\bold{x}})-\bold{x})_i= 2\lambda^2 (1-\frac{1}{n})$ for $i=1, \cdots, n$.
\end{proposition}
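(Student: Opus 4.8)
## Proof Proposal for Proposition~\ref{prop:variance-imaging}

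The plan is to compute the projection $f_{L_2}(\tilde{\bold{x}})$ onto the hyperplane $C = \{\bold{z} : \sum_i z_i = b\}$ explicitly, since $C$ is an affine subspace and the $l_2$-projection has a closed form. Writing $\tilde{\bold{x}} = \bold{x} + \bold{U}$ where $\bold{U} = (U_1,\dots,U_n)$ are i.i.d.\ $Lap(\lambda)$, and using that $\bold{x} \in C$ (the constraint is an invariant, so $\sum_i x_i = b$), the projection of $\tilde{\bold{x}}$ onto $C$ subtracts the component of the residual along the normal direction $\bold{1}/\sqrt{n}$. Concretely, $f_{L_2}(\tilde{\bold{x}}) = \tilde{\bold{x}} - \frac{1}{n}\bigl(\sum_j \tilde{x}_j - b\bigr)\bold{1} = \bold{x} + \bold{U} - \frac{1}{n}\bigl(\sum_j U_j\bigr)\bold{1}$, so that the postprocessed noise in coordinate $i$ is $\bigl(f_{L_2}(\tilde{\bold{x}}) - \bold{x}\bigr)_i = U_i - \frac{1}{n}\sum_{j=1}^n U_j$.

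First I would verify this projection formula: the displacement $-\frac{1}{n}(\sum_j U_j)\bold{1}$ is parallel to $\bold{1}$, hence orthogonal to $C$, and the result lies in $C$ since its coordinate sum is $\sum_j \tilde{x}_j - (\sum_j \tilde{x}_j - b) = b$. Then the computation reduces to the variance of $U_i - \bar{U}$ where $\bar{U} = \frac{1}{n}\sum_j U_j$. Using independence, $\mathrm{Var}(U_j) = 2\lambda^2$ (the variance of $Lap(\lambda)$), and the standard identity $\mathrm{Var}\bigl(U_i - \bar U\bigr) = \bigl(1 - \tfrac{1}{n}\bigr)^2 \mathrm{Var}(U_i) + \tfrac{1}{n^2}\sum_{j \neq i}\mathrm{Var}(U_j) = \bigl(1-\tfrac1n\bigr)^2 2\lambda^2 + \tfrac{n-1}{n^2}2\lambda^2$, which simplifies to $2\lambda^2\bigl(1 - \tfrac{1}{n}\bigr)$. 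Equivalently one can note $U_i - \bar U$ has mean zero and compute $\mathbb{E}[(U_i-\bar U)^2]$ directly by expanding and using $\mathbb{E}[U_jU_k] = 2\lambda^2\delta_{jk}$.

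There is essentially no hard step here; the only thing to be careful about is the claim that the $l_2$-minimizer over the affine hyperplane is given by the orthogonal-projection formula (true because $C$ is convex and closed, so the minimizer is unique and characterized by orthogonality of the residual to the direction space of $C$), and the fact that $\bold{x} \in C$, which is exactly what Definition~\ref{def:invariant} guarantees for an invariant. Given those, the result is immediate. I would simply remark that this recovers the cited formula of \cite{Zhu2021bias} and move on; no delicate estimation or case analysis is needed.
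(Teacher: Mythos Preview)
Your proof is correct. Note, however, that the paper itself does not prove this proposition at all: it simply quotes the result as Theorem~12 of \cite{Zhu2021bias} and uses it as a black box for comparison with the conditioning variance. Your argument---writing the $l_2$-projection onto the affine hyperplane $\{\bold{z}:\sum_i z_i=b\}$ in closed form, using $\bold{x}\in C$ to reduce the postprocessed noise to $U_i-\bar U$, and then computing $\mathrm{Var}(U_i-\bar U)=2\lambda^2(1-\tfrac1n)$ from independence and $\mathrm{Var}(U_j)=2\lambda^2$---is exactly the standard derivation and supplies what the paper omits. There is nothing to compare on the paper's side; your self-contained computation is the natural one and is sound.
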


\begin{table*}[t!]
 \centering
  \caption{Summary of Main Results}
    \resizebox{\textwidth}{!}
    {
 \begin{tabular}{|c|c|c|c|c|c|c|c|c|c|}
  \hline
     \textbf{Constrained DP}
     & \textbf{Operator} 
     & \textbf{Privacy preserving} 
     & \begin{tabular}[c]{@{}l@{}}
             \textbf{   Privacy preserving}\\ \textbf{when} $\boldsymbol{C}$ \textbf{is represented}\\ \textbf{  by linear equalities}
             \end{tabular} 
           & \textbf{Postprocessing} 
           & \begin{tabular}[c]{@{}l@{}}
                \quad \textbf{    Basic}\\ \textbf{composition} 
           \end{tabular}
           
           & \begin{tabular}[c]{@{}l@{}}
           \quad\textbf{Disjoint}\\ \quad\textbf{    union}\\ \textbf{composition}
           \end{tabular}
           &\begin{tabular}[c]{@{}l@{}}
           \quad \textbf{Mixture}\\\textbf{composition} 
           \end{tabular}
           & \begin{tabular}[c]{@{}l@{}}
           \textbf{Minimum}\\ \textbf{  principle} \end{tabular}
           & \begin{tabular}[c]{@{}l@{}}
           \textbf{Approximate}\\ \textbf{computation} 
           \end{tabular}\\
  \hline
  \textbf{Belief revision}& Conditioning & \ding{56} & \ding{52} & \ding{52} & \ding{52} & \ding{52} & \ding{56} & $KL$-divergence & Monte Carlo\\
  \hline
  \textbf{Belief update}& Imaging & \ding{52} & \ding{52} & \ding{52} & \ding{52} & \ding{56} & \ding{52} &
  \begin{tabular}[c]{@{}l@{}}
  $L_2$-distance\\ $L_1$-distance\end{tabular} & Optimization
  \\
  \hline
 \end{tabular}
  }
 \label{table2}
\end{table*}

Here we consider the simplest case where $n=3$ and the invariant is $C=\{(z_1,z_2,z_3)\in \mathbb{R}^3: z_1+ z_2 + z_3=b\}$ for some constant $b$.  The Laplace mechanism $M(D) = f(D) + (U_1, U_2, U_3)$ where $U_1, U_2$ and $U_3$ are identically independent Laplace random variables. Under the constraint  $z_1 + z_2 +z_3= b$, it is easy to see that $U_1+U_2+U_3=0$.
Let $p_{(U_1, U_2, U_3)}$ denote the probability density function (p.d.f) of the random vector $(U_1, U_2, U_3)$. It follows that $p_{U_1, U_2, U_3}(u_1, u_2, u_3) = (\frac{1}{2\lambda})^3 \exp(-\frac{|u_1| + |u_2| +|u_3|}{\lambda})$. Let $(U^*_1, U^*_2, U^*_3)$ denote the conditional random vector $(U_1, U_2, U_3) | (U_1+U_2 +U_3 =0)$.  So, if $M(D) = f(D) + (u_1,u_2,u_3)$ such that $u_1+u_2+u_3=0$, then the probability density of $M(D)|C$ at $(u_1,u_2,u_3)$ is
$\frac{p_{U_1, U_2, U_3}(u_1, u_2, -(u_1+u_2))}{\int_{\mathbb{R}^2}p_{U_1, U_2, U_3}(u_1, u_2, -(u_1+u_2)) du_1du_2}$; if $u_1+u_2+u_3 \neq 0$, then the probability mass of $M(D)|C$ at $(u_1,u_2, u_3)$ is 0.  So the conditional privacy mechanism $M(\cdot | C)(D) = f(D) + (U_1, U_2, U_3) |C$. We formulate the probability density of $M(D)|C$ at $(u_1,u_2,u_3)$, defined as $h\left(u_1,u_2\right)$
\begin{eqnarray*}
h\left(u_1,u_2\right)
  &=&\frac{\exp \left(-\frac{|u_1|+|u_2|+|u_1+u_2|}{\lambda}\right)}{
  \underset{u_1,u_2}\iint \exp \left(-\frac{|u_1|+|u_2|+|u_1+u_2|}{\lambda}\right) d u_{1} d u_{2}}
\end{eqnarray*}

Define $h(u_1)=\underset{u_2}\int h(u_1,u_2) d u_2$, the marginal variance of $u_1$ is then given by ${Var}(u_1)=\underset{u_1}{\int} u_1^2 h(u_1) d u_1$.
Now, consider computing $h(u_1,u_2)$, $h(u_1)$ and ${Var}(u_1)$.
The denominator of $h(u_1,u_2)$ is a constant $K=\underset{u_1,u_2} \iint \exp \left(-\frac{|u_1|+|u_2|+|u_1+u_2|}{\lambda}\right) d u_{1} d u_{2}$. Since the integration region is symmetric about the origin and $p\left(u_{1}, u_{2}, -u_{1}-u_{2}\right)=p\left(-u_{1}, -u_{2}, u_{1}+u_{2}\right)$, we get $K= \frac{3}{2}{\lambda}^2$.To compute $h(u_1)$, we firstly consider the case $u_1\ge0$

\begin{eqnarray*}
    h(u_1)&=\frac{1}{K}\int_{-\infty}^{+\infty} \exp \left(-\frac{|u_1|+|u_2|+|u_1+u_2|}{\lambda}\right) d u_2\\
        &=\frac{1}{K}\left(\left(\lambda+u_1\right)\exp\left(-\frac{2u_1}{\lambda}\right)\right)
\end{eqnarray*}
Similar derivation can be performed on case $u_1<0$. Thus for any $u_1$, it follows that $h(u_1)=\frac{1}{K}\left(\left(\lambda+|u_1|\right)\exp\left(-\frac{2|u_1|}{\lambda}\right)\right)$. 
At last, we get marginal variance of $u_1$, ${Var}(u_1)=\underset{u_1}{\int} u_1^2 h(u_1) d u_1
    =\frac{1}{K} \int_{-\infty}^{+\infty}u_1^2\left(\left(\lambda+|u_1|\right)\exp\left(-\frac{2|u_1|}{\lambda}\right)\right) d u_1
    =\frac{5}{6}{\lambda}^2$.
So when $n=3$, the variance $Var(u_1)$ of the marginal distribution of conditioning $M(D)$ is smaller than the above variance $Var(f_{L_2}(\tilde{\bold{x}})-\bold{x})$ of marginal distribution by imaging (Proposition \ref{prop:variance-imaging}).  This is also true for the case when $n=2$ (Example 4.1 in \cite{GongM20}). We performed some simulation experiments for larger $n$ which showed similar results.  We conjecture that this holds generally for any $n$ and hence the variance of the marginal distribution by conditioning is smaller than that by imaging on this invariant for simple counting query.  


\subsection{MCMC Method and Comparative Experiment}

In this part, we experimentally compare accuracy  between conditioning approach and imaging approach in processing data based on region hierarchy.  In the experiment, we choose the improved MCMC method to obtain samples of the consistency constraint privacy mechanism, and compare it with the classic post-processing projection technique such as TopDown algorithm. We choose New York City Taxi Dataset for the experiment. The specific selection is the yellow taxi trip dataset in February 2022. The relevant document is called “\text{yellow }\!\!\_\!\!\text{ tripdata }\!\!\_\!\!\text{ 2022-02}\text{.parquet}” while records all trip data of the iconic yellow taxi in New York City in February 2022. The dataset has 19 attribute columns, 2979431 record rows, where each row represents a taxi trip. We only use one attribute “PULocationID” in this experiment, which ranging from 1 to 263, indicates TLC Taxi Zone in which the taximeter was engaged. We treat each taxi as a group and build a 3-level hierarchy of trip record frequency in each zone. New York city, abbreviated as NYK, is at Level 1, six boroughs, i.e., Bronx (Bx), Brooklyn (Bl), EWR, Manhattan (M), Queens (Q) and Staten Island (SI), is at Level 2 and Level 3 includes 263 zones corresponding to “PULocationID”.
Here we provide an improved metropolis Hastings (MH) algorithm  $M_{MH}$. Our experiment shows the advantage in accuracy by comparing the conditioning  algorithm with the imaging  algorithm.  Trip frequency distribution in all zones is taken as the confidential query $\bold{x}$, and the Laplace mechanism is selected to perturb $\bold{x}$. Finally, the output  $\tilde{\bold{x}}$ satisfying the differential privacy and consistency constraints is obtained. In this experiment we select $L_1$- distance between $\bold{x}$ and $\tilde{\bold{x}}$ as the performance evaluation criteria. For comparison, we normalized the $L_1$-distance. i.e., $\frac{1}{m}|\text{x}-\tilde{\text{x}}|$, where ${m}$ is the dimension of $\bold{x}$ and $\tilde{\bold{x}}$. Algorithm’s running efficiency at different levels of privacy budget is shown in  Table \ref{comparison-results}.

 \begin{table}[htbp]
\begin{center}
 \caption{Accuracy Comparison of Algorithms Running on NY City Taxi Dataset at $L_1$-distance}
\begin{tabular}{ |c|c|c|c| c| } 
\hline
$\epsilon$ & Level & $M_{MH}$  & TopDown \\
\hline
\multirow{3}{4em}{\quad\ \ 0.5} & 1 &0.013352& 0.036806 \\ 
& 2 & 0.028890 & 0.162698 \\ 
& 3 & 1.680823 & 2.345461 \\ 
\hline
\multirow{3}{4em}{\quad\ \ \ 1} & 1 & 0.018244 & 0.023974 \\ 
& 2 &0.057345 & 0.091148 \\ 
& 3 & 1.534053 & 1.526361 \\
\hline
\multirow{3}{4em}{\quad\ \ \ 2} & 1  & 0.003445 & 0.005173 \\ 
& 2 & 0.015032 & 0.027614 \\ 
& 3 & 1.052862 & 1.260267\\
\hline
\end{tabular}
\label{comparison-results}
\end{center}
\end{table}

Through the comparison of the two algorithms under different privacy budget conditions and different hierarchy levels, it can be seen that in most cases, the conditioning algorithm $M_{MH}$  will be more accurate than the classic imaging or projection algorithms. And since the noise decreases as the privacy budget increases, the errors of all algorithms decrease as the privacy budget increases.


\section{Related Works and Conclusion}

To the best of our knowledge, we are the first to link belief revision and update to constrained DP.  The main contributions  and the comparisons between these two approaches are summarized in Table \ref{table2} (\ding{52} there means ``true" and \ding{56} ``not necessarily true").  Theories of belief change may explain  why almost all constrained DP mechanisms in the literature are essentially classified either as belief revision or as belief update.  There is a long tradition of designing constrained DP by imaging in the database-system community \cite{Zhang2016privtree,Hay2010boosting,Wang2020locally,Lee2015maximum,Zhu2021bias,Gao2022subspace} and later in US Census \cite{Abowd20222020}.   Constrained DP as belief revision has appeared quite recently and mainly from the statistics community.
 For example, congenial DP in \cite{GongM20,Gong2022Transparent} and bounded leakage DP \cite{LigettPR20} are essentially as belief revision. None of these papers relates their ideas to the notions of belief revision and update. With this connection, we contribute two interesting \emph{new} theorems about constrained DP (Theorems \ref{thm:disjoint-union} and \ref{thm:mixture}). We expect to obtain more important new properties about constrained DP from the well-established perspectives of belief revision and update. Also we will consider the models of  "screened revision" and "credibility-limited revision" in the full version.
Constrained DP by conditioning on invariant $C$ can be regarded as a special bounded leakage DP when the invariant $C$ can be represented by $[M'(D) =o]$ for some $o$ and some randomized algorithm $M'$. We may consider to extend constrained DP as belief update to a similar more general setting. 
 Conditioning and imaging are two important approaches in statistical and causal inference. It may be an interesting research topic to explore the relationships between constrained DP and causality \cite{Tschantz2020sok}. 

\section*{Acknowledgments}
This work is partially supported by  by NSFC (61732006) and Public Computing Cloud, Renmin University of China. YF is supported by the National Key R\&D Program of China (2018YFA0306704) and the Australian Research Council (DP180100691 and DP220102059). We would like to thank Yuxiao Guo for discussion about some techniques about change of variables. 

\bibliographystyle{plain}
\bibliography{AAAI23-Constrained_DP-Arxiv}


\clearpage
\centerline{\textbf{\Large Appendix}}
\renewcommand\thesection{\Alph{section}}
\setcounter{section}{0}
\section{Conditional Density When $\boldsymbol{M(D)(C)=0}$}

Assume that $P_{M(D)}(C) = 0$.    Here we consider as an illustration a simple case when $M$  is additive and the invariant $C$ is defined by a group of linear inequalities $A\bold{z} = \bold{b}$ where $A$ is a $(n'\times n)$ matrix ($n' < n$) and $\bold{b}$ is a $(n'\times 1)$ column vector. It follows  $P_{M(D)}(C) = 0$. Let $M(D) = f(D) + U$ where $f$ is a deterministic query and $U$ is a random vector $(U_1, U_2, \cdots, U_n)$ with probability density function $p_U$ where $U_i$'s are i.i.d. random variables $Lap(\lambda)$.  Since $Af(D) = AM(D)$ with probability 1, $A U =0$ with probability 1. So, in this case, the randomness in $M(D)$ comes from the random vector $U$ and hence is independent of the original dataset $D$.  Without loss of generality, we assume that the rank of $A$ is $n'$, i.e., $A$ is of full rank and, by solving the group of linear equations $AU=0$ of unknowns $U_1, U_1, \cdots, U_n$, we get 
$$ \left\{
\begin{array}{lr}
U_{n'+1}= U_{n'+1}(U_1, U_2, \cdots, U_{n'})\\
U_{n'+2}= U_{n'+2} (U_1, U_2, \cdots, U_{n'})\\
\cdots \\
U_{n} = U_n (U_1, U_2, \cdots, U_{n'})
\end{array}
\right. $$
In other words, $U_1, \cdots, U_{n'}$ are the $n'$ free variables.  Now we define the change of variables as follows:
$$ \left\{
\begin{array}{lr}
V_1 = U_1 \\
V_2 = U_2 \\
\cdots \cdots \\
V_{n'} = U_{n'}\\
V_{n'+1}= U_{n'+1}(U_1, U_2, \cdots, U_{n'}) - U_{n'+1}\\
V_{n'+2}= U_{n'+2} (U_1, U_2, \cdots, U_{n'}) - U_{n'+2}\\
\cdots \\
V_{n} = U_n (U_1, U_2, \cdots, U_{n'}) - U_n
\end{array}
\right. $$
In the notation of matrix, $(V_1, \cdots, V_n)^t =  A_{V\rightarrow U} (U_1, \cdots, U_n)$ for some matrix $A_{V\rightarrow U}$ of rank $n'$.  Let $p_{V_1, \cdots, V_n}$ denote the probability density function for the random vector $V=(V_1, \cdots, V_n)$.  Under the change of variables from $U_1, \cdots, U_n$ to $V_1, \cdots, V_n$, the constraints  $AU=0$ is equivalent to the following constraints: $V_{n'+1}=0, \cdots, V_n=0$.  In order to find the density $p_{U_1, \cdots, U_n}$ conditioned on the invariant $AU=0$, we only need to find the density function $p_{V_1, \cdots, V_n}$ conditioned on the constraints $V_{n'+1} =\cdots, = V_{n} =0$. 
According to the law of change of variables of multivariate calculus, we have
\begin{align}
    p_{(V_1, \cdots, V_n)} (v_1, \cdots, v_n)  & \propto p_{U_1, \cdots, U_n} (v_1, \cdots, v_{n'}, U_{n'+1} (v_1,\cdots, v_{n'}), \cdots, U_n(v_1, \cdots, v_{n'}))
\end{align}
The conditional density function $p_{V_1, \cdots, V_n}$ on the constraints $V_{n'+1}= \cdots = V_{n}=0$
can be derived as follows:
\begin{align}
   p_{(V|V_{n'+1}=\cdots = V_n=0)} (v_1, \cdots, v_{n'}) = & \frac{p_V(v_1, \cdots, v_{n'}, 0, \cdots, 0)}{\int_{\mathbb{R}^{n'}}p_V(v_1, \cdots, v_{n'}, 0, \cdots, 0) dv_1\cdots dv_{n'}}
\end{align}
By applying the above linear transformation $(V_1, \cdots, V_n)^t =  A_{V\rightarrow U} (U_1, \cdots, U_n)$, we get 
$(v_1, \cdots, v_{n'}, \\ 0, \cdots, 0) = (u_1, \cdots, u_{n'}, U_{n'+1}(u_1, \cdots, u_{n'}), \cdots, U_{n}(u_1, \cdots, u_{n'})))$. So the conditional density function
\begin{align}
  p_{U | AU=0})  = & \frac{p_U(u_1, \cdots, u_{n'}, U_{n'+1}(u_1, \cdots, u_{n'}), \cdots, U_{n}(u_1, \cdots, u_{n'}))}{\int_{\mathbb{R}^{n'}} p_U(u_1, \cdots, u_{n'}, U_{n'+1}(u_1, \cdots, u_{n'}), \cdots, U_{n}(u_1, \cdots, u_{n'})) du_1\cdots d_{u_{n'}}}
\end{align}
(let $K_C$ denotes the denominator)

Now we define the conditional random variable $M(D) |C$ and its probability density function $p_{M(D)|C}$. 
If $M(D) = f(D)+ u = f(D)+(u_1, \cdots, u_n) \in C$, then $p_{M(D)|C} (f(D)+ (u_1, \cdots, u_n)) = 	p_{U|AU=0}(u)=
\frac{p_U(u_1, \cdots, u_{n'}, U_{n'+1}(u_1, \cdots, u_{n'}), \cdots, U_{n}(u_1, \cdots, u_{n'}))}{\int_{\mathbb{R}^{n'}} p_U(u_1, \cdots, u_{n'}, U_{n'+1}(u_1, \cdots, u_{n'}), \cdots, U_{n}(u_1, \cdots, u_{n'})) du_1\cdots d_{u_{n'}}}$; if $M(D) = f(D)+ u \not\in C$, then $p_{M(D)|C} (f(D)+u)=0$.  In summary, if $(v_1, v_2, \cdots, v_n)\in C$, then $p_{M(D)|C} (v_1, v_2, \cdots, v_n) = \frac{p_{M(D)}(v_1, \cdots, v_n)}{K_C}$; if $(v_1, \cdots, v_n)\not\in C$, then $p_{M(D)|C}(v_1, \cdots, v_n) =0$.  Note that $K_C$ depends only on $C$ and the noise-adding random vector $(U_1, \cdots, U_n)$. 

\section{Details from Section 4}

In this section, we first present a more detailed proof of the variance of the marginal distribution of the conditional Laplace mechanisms when $n=3$ in Section 4 of the main text.  And then we provide the details including algorithms for MCMC Method and Comparative Experiment. 

\subsection{Variance of Marginal Distribution}

We assume the original data $\boldsymbol{x}\in \mathbb{R}^{n}$ in $D$ satisfy specific invariant, e.g. $\sum_{i=1}^{n} x_{i}=b$, where $b \in \mathbb{R}$ is a constant. Denote $M$ as a differentially private mechanism, which can be satisfied by calibrating i.i.d. noises $\left\{U_{i}\right\}_{i \in[n]}$ drawn from a Laplace distribution. Since $M$ typically doesn't obey some invariant, we aim at designing a new mechanism $M^{\ast}$ such that $M^{\ast}(D)=M(D)|C$, which means  $M^{\ast}$ will be forced to satisfy the invariant by conditioning.

We now take the dimension $n=3$ for example, i.e., the invariant $C=\{(z_1,z_2,z_3)\in \mathbb{R}^3: z_1+ z_2 + z_3=b\}$ for some constant $b$. Combined with Laplace mechanism, it is easy to see that $U_1+U_2+U_3=0$. Let $p_{(U_1, U_2, U_3)}$ denote the probability density function (p.d.f) of the random vector $(U_1, U_2, U_3)$.
The joint probability density of $p_{(U_1, U_2, U_3)}$ at $(u_1,u_2,u_3)$ is as follows.

$$
p_{U_1,U_2,U_3}(u_1,u_2,u_3)= \frac{1}{(2 \lambda)^{3}} \exp \left(-\frac{|u_1|+|u_2|+|u_3|}{\lambda}\right)
$$

When invariant $U_1+U_2+U_3=0$ is imposed, since $$\underset{u_1,u_2,u_3}{\iiint} p_{U_1,U_2,U_3}(u_1,u_2,u_3)|_{U_1+U_2+U_3=0} d u_1d u_2d u_3=0,$$ traditional way of computing conditional probability cannot work out.

Firstly, we introduce a new set of variables $\boldsymbol{V}=\left\{V_{i}\right\}_{i=1,2,3}$ with the transition:
$\boldsymbol{V}=P\boldsymbol{U}$, where
$
P=\begin{pmatrix}  
  1 & 0 & 0 \\  
  0 & 1 & 0 \\  
  1 & 1 & 1  
\end{pmatrix} 
$. Inversely, $\boldsymbol{U}=P^{-1}\boldsymbol{V}$, where
$
P^{-1}=\begin{pmatrix}  
  1 & 0 & 0 \\  
  0 & 1 & 0 \\  
  -1 & -1 & 1  
\end{pmatrix} $. We may easily conclude that the joint p.d.f of $\boldsymbol{V}$ is proportional to that of $\boldsymbol{U}$ after variable substitution, i.e., $$g_{V_1,V_2,V_3}(v_1,v_2,v_3) \propto p_{U_1,U_2,U_3}(v_1,v_2,-v_1-v_2+v_3)$$

Then, the given invariant, $U_1+U_2+U_3=0$, can be equivalently represented as $V_3=0$ so that the conditioned probability density of variables $(V_1,V_2,V_3)$ at $(v_1,v_2,v_3)$ has the form of
$$
g\left(v_{1}, v_{2}, v_{3}\right)|_{V_{3}=0}=\frac{g\left(v_{1}, v_{2}, 0\right)}{\underset{v_1,v_2} \iint g\left(v_{1}, v_{2}, 0\right) d v_{1} d v_{2}}
$$

Exploiting the above proportion, we formulate the probability density of $M^{\ast}(D)$ at $(u_1,u_2,u_3)$, defined as $h\left(u_1,u_2\right)$
\begin{eqnarray*}
  h\left(u_1,u_2\right)
  =p\left(u_{1}, u_{2}, u_{3}\right)|_{U_1+U_2+U_3=0}
  & =&\frac{p\left(u_{1}, u_{2}, -u_{1}-u_{2}\right)}{\underset{u_1,u_2} \iint p\left(u_{1}, u_{2}, -u_{1}-u_{2}\right) d u_{1} d u_{2}} \\ 
  & =&\frac{\exp \left(-\frac{|u_1|+|u_2|+|u_1+u_2|}{\lambda}\right)}{
  \underset{u_1,u_2}\iint \exp \left(-\frac{|u_1|+|u_2|+|u_1+u_2|}{\lambda}\right) d u_{1} d u_{2}}
\end{eqnarray*}

Define $h(u_1)=\underset{u_2}\int h(u_1,u_2) d u_2$, the marginal variance of $u_1$ is then given by $Var(u_1)=\underset{u_1}{\int} u_1^2 h(u_1) d u_1$.

Now, consider computing $h(u_1,u_2)$, $h(u_1)$ and $Var(u_1)$.

The denominator of $h(u_1,u_2)$ is a constant $K=\underset{u_1,u_2} \iint \exp \left(-\frac{|u_1|+|u_2|+|u_1+u_2|}{\lambda}\right) d u_{1} d u_{2}$. Since the integration region is symmetric about the origin and $p\left(u_{1}, u_{2}, -u_{1}-u_{2}\right)=p\left(-u_{1}, -u_{2}, u_{1}+u_{2}\right)$, we get
\begin{eqnarray*}
   K&=&\underset{D} \iint \exp \left(-\frac{|u_1|+|u_2|+|u_1+u_2|}{\lambda}\right) d u_{1} d u_{2} \\
    &=&2\underset{D_1+D_2+D_3} \iint \exp \left(-\frac{|u_1|+|u_2|+|u_1+u_2|}{\lambda}\right) d u_{1} d u_{2} \qquad(\text{see Fig.}\ref{integration_region})\\
    &=&2\int_{0}^{+\infty} \exp\left(-\frac{2u_1}{\lambda}\right) d u_1 \int_{0}^{+\infty} \exp\left(-\frac{2u_2}{\lambda}\right) d u_2\\
    &\;&+2\int_{0}^{+\infty} \exp\left(-\frac{2u_1}{\lambda}\right) d u_1 \int_{-u_1}^{0}  d u_2\\
    &\;&+2\int_{-\infty}^{0} \exp\left(\frac{2u_2}{\lambda}\right) d u_2 \int_{0}^{-u_2}  d u_1\\
    &=&2\left(\frac{{\lambda}^2}{4}+\frac{{\lambda}^2}{4}+\frac{{\lambda}^2}{4}\right)=\frac{3}{2}{\lambda}^2 
\end{eqnarray*}

To compute $h(u_1)$, we firstly consider the case $u_1\ge0$

\begin{eqnarray*}
    h(u_1)&=&\frac{1}{K}\int_{-\infty}^{+\infty} \exp \left(-\frac{|u_1|+|u_2|+|u_3|}{\lambda}\right) d u_2\\
     &=&\frac{1}{K}\left(\int_{-\infty}^{-u_1} \exp \left(\frac{2u_2}{\lambda}\right) d u_2 +\int_{-u_1}^{0} \exp\left(-\frac{2u_1}{\lambda}\right) d u_2 +\int_{0}^{+\infty} \exp\left(-\frac{2u_1+2u_2}{\lambda}\right) d u_2\right)\\
    &=&\frac{1}{K}\left(\left(\lambda+u_1\right)\exp\left(-\frac{2u_1}{\lambda}\right)\right)
\end{eqnarray*}

Similar derivation can be performed on case $u_1<0$. Thus for any $u_1$, it follows that $$h(u_1)=\frac{1}{K}\left(\left(\lambda+|u_1|\right)\exp\left(-\frac{2|u_1|}{\lambda}\right)\right)$$

At last, we get marginal variance of $u_1$

\begin{eqnarray*}
    Var(u_1)&=&\underset{u_1}{\int} u_1^2 h(u_1) d u_1\\
    &=&\frac{1}{K} \int_{-\infty}^{+\infty}u_1^2\left(\left(\lambda+|u_1|\right)\exp\left(-\frac{2|u_1|}{\lambda}\right)\right) d u_1\\
    &=&\frac{5}{6}{\lambda}^2
\end{eqnarray*}

\begin{figure*}[h]
 \centering
 \includegraphics[width=1.0\textwidth]{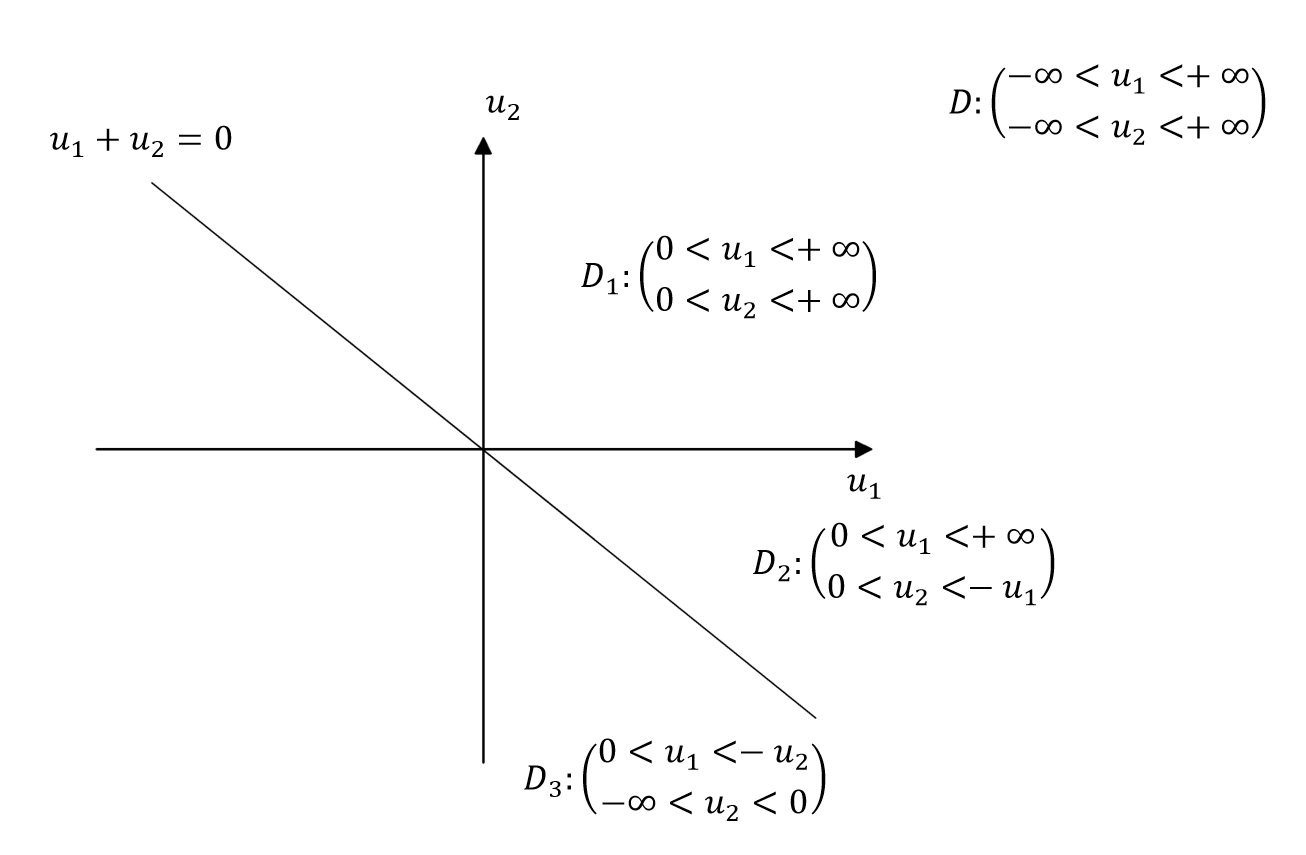}
 \caption{Integration region partition}
 \label{integration_region}
\end{figure*}

\subsection{MCMC Method and Comparative Experiment}

In this part, we experimentally compare accuracy  between conditioning approach and imaging approach in processing data based on region hierarchy. In the experiment, we choose the improved MCMC method to obtain samples of the consistency constraint privacy mechanism, and compare it with the classic post-processing projection technique such as TopDown algorithm.

\subsubsection{Experimental Dataset}

We choose New York City Taxi Dataset \cite{NYC2022} for the experiment. The specific selection is the yellow taxi trip dataset in February 2022. The relevant document is called “\text{yellow }\!\!\_\!\!\text{ tripdata }\!\!\_\!\!\text{ 2022-02}\text{.parquet}” while records all trip data of the iconic yellow taxi in New York City in February 2022. The dataset has 19 attribute columns, 2979431 record rows, where each row represents a taxi trip. We only use one attribute “PULocationID” in this experiment, which ranging from 1 to 263, indicates TLC Taxi Zone in which the taximeter was engaged. We treat each taxi as a group and build a 3-level hierarchy of trip record frequency in each zone. New York city, abbreviated as NYC, is at Level 1, six boroughs, i.e., Bronx (Br), Brooklyn (Bl), EWR, Manhattan (M), Queens (Q) and Staten Island (SI), is at Level 2 and Level 3 includes 263 zones corresponding to “PULocationID”.

Trip record frequency in 263 zones is written as $x_{i}(i=1, \cdots, 263)$. Similarly, frequency in boroughs is written as $\left\{x_{b} \mid b \in B\right\}$, where $B$ is borough code set $\{B r, B l, E W R, M, Q, S I\}$, and the whole New York city' s trip record frequency is written as $x_{N Y C}$. If let code set of zone belonging to borough $b \in B$ be $Z_{b}$, we have:
$$
\left\{\begin{array}{l}
\sum_{i \in Z_{b}} x_{i}=x_{b}, b \in B \\
\sum_{b \in B} x_{b}=x_{N Y C}
\end{array}\right.
$$
It is easy to say that confidential query, $x=\left(x_{1}, \cdots, x_{263}, x_{B r}, \cdots, x_{S I}, x_{N Y C}\right)$, is a 270-dimensional frequency vector. Correspondingly we denote constrained privacy query as $\tilde{x}=(\tilde{x}_{1}, \cdots, \tilde{x}_{263}, \tilde{x}_{B r}, \cdots,\\  \tilde{x}_{S I}, \tilde{x}_{N Y C})$. And thus the linear invariant constraints as follows
$$
\left\{\begin{array}{l}
\sum_{i \in Z_{b}} \tilde{x}_{i}=\tilde{x}_{b}, b \in B\\
\sum_{b \in B} \tilde{x}_{b}=\tilde{x}_{N Y C}
\end{array}\right.
$$
where $\sum_{i \in Z_{b}} \tilde{x}_{i}=\tilde{x}_{b}(b \in B)$ are linear constraints between level 2 and level 3, denoted as $f_{32} ; \sum_{b \in B} \tilde{x}_{b}=\tilde{x}_{N Y C}$ is linear constraint between the level 2 and level 1, denoted as $f_{21}$, then this 3 -level hierarchy $T_{3}$ as shown below:

\begin{figure}[htbp]
\centering
\includegraphics[height=5.5cm,width=12cm]{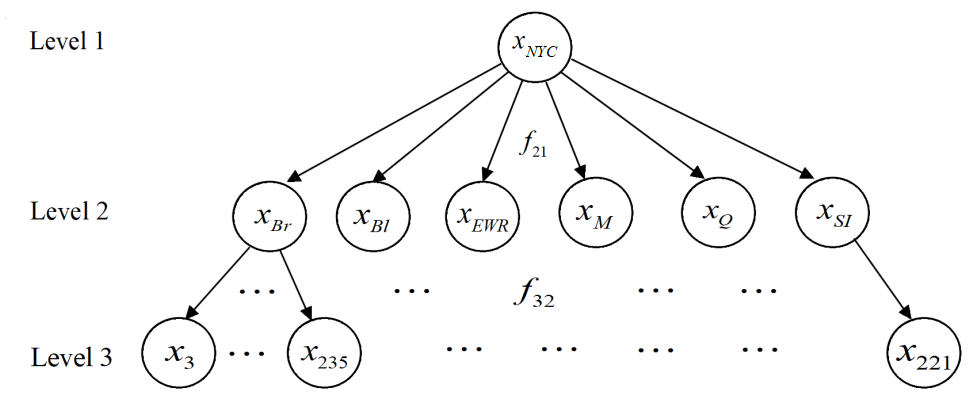}
\caption{NYC Taxi Zone Hierarchy $T_{3}$}
\label{fig:1}
\end{figure}

\subsubsection{MCMC Method}

Let $p$ be the probability density derived from the unconstrained differential privacy algorithm $M$; Accordingly, let $p^{*}$ be the conditional probability density under consistency constraints. If you want to obtain the sample under the privacy mechanism subject to consistency constraints, the easiest way is to use rejection sampling or importance sampling derived from Monte Carlo method. However, it is obvious that such methods are inefficient. Therefore we choose MCMC, that is, the Markov Chain Monte Carlo (MCMC) method \cite{Andrieu2003introduction,Hastings1970monte,Mengersen1996rates,Metropolis1949monte}, which can customize a more efficient algorithm according to the requirements of consistency.

Here we provide an improved metropolis Hastings (MH) algorithm $\mathcal{M}_{M H}$ to achieve the extraction of consistency privacy distribution samples. Let the unconstrained privacy mechanism $p\left(x^{(i)}\right)$ be invariant distribution, and let a common distribution, such as normal distribution, be proposal distribution $q\left(x^{(i+1)} \mid x^{(i)}\right)$, Thus execution steps of the classical MH algorithm as follows:

Firstly, the candidate value $x^{(i+1)}$ given the current value $x^{(i)}$ are sampled according to $q\left(x^{(i+1)} \mid x^{(i)}\right)$; then the Markov chain moves to $x^{(i+1)}$ with acceptance probability $\mathcal{A}\left(x^{(i)}, x^{(i+1)}\right)=\min \left\{1, \frac{p\left(x^{(i+1)}\right) q\left(x^{(i)} \mid x^{(i+1)}\right)}{p\left(x^{(i)}\right) q\left(x^{(i+1)} \mid x^{(i)}\right)}\right\}$, otherwise it will remain at $x^{(i)}$.

Next, we design an improved MH algorithm under the consistency constraint based on the region hierarchy:

Let confidential query be $x$, and the $i$th query satisfying the privacy mechanism $p$ be $\tilde{x}^{(i)}$; Let region hierarchy of $N$ levels be $T_{N}$; The set of $\tilde{x}^{(i)}$' components belonging to level $i$ is $\left\{\tilde{x}_{k}^{(i)} \mid k \in K_{j}\right\}$, abbreviated as $\tilde{x}_{K_{j}}^{(i)}$, then $\tilde{x}^{(i)}=\left(\tilde{x}_{K_{1}}^{(i)}, \cdots, \tilde{x}_{K_{N}}^{(i)}\right)$ without regard to the order of components; In addition, let linear equality constraints between levels be $\left\{f_{j+1, j}\left(\tilde{x}_{K_{j+1}}^{(i)}, \tilde{x}_{K_{j}}^{(i)}\right)=0 \mid j=1, \cdots, N-1\right\}$, and inequality constraint be $A \tilde{x}^{(i)} \geq a$.

So the steps of the $i$ th sampling as below:

(1) First of all, draw samples of $K_{N}$ components $\tilde{x}_{K_{N}}^{(i)}$ in level $N$ from $q\left(\tilde{x}^{(i)} \mid \tilde{x}^{(i-1)}\right)$, and then use the constraint equation $f_{j+1, j}\left(\tilde{x}_{K_{j+1}}^{(i)}, \tilde{x}_{K_{j}}^{(i)}\right)=0$ to solve $\tilde{x}_{K_{N-1}}^{(i)}$. And so on, until level 1 ' s component $\tilde{x}_{K_{1}}^{(i)}$ is obtained, and finally the $\tilde{x}^{(i)}$ that satisfies linear equality constraints is obtained;

(2) Write inequality constraint into the rejection-acceptance condition in the form of Kronecker notation, and calculate the acceptance probability of the sample according to $\mathcal{A}\left(\tilde{x}^{(i-1)}, \tilde{x}^{(i)}\right)=\min \left\{1, \frac{p\left(\tilde{x}^{(i)}\right) q\left(\tilde{x}^{(i-1)} \mid \tilde{x}^{(i)}\right)}{p\left(\tilde{x}^{(i-1)}\right) q\left(\tilde{x}^{(i)} \mid \tilde{x}^{(i-1)}\right)} \mathbb{I}\left(A \tilde{x}^{(i)} \geq a\right)\right\}$;

(3) Accept $\tilde{x}^{(i)}$ as the result of the $i$ th sampling with probability $\mathcal{A}$, otherwise reject $\tilde{x}^{(i)}$ and let $\tilde{x}^{(i)}=\tilde{x}^{(i-1)}$.

$\mathcal{M}_{M H}$ is suitable for the privacy scheme of discrete and continuous data, and its specific implementation as follows:
\begin{algorithm}[htb] 
  \caption{$\mathcal{M}_{M H}$} \label{alg:k-means}
\textbf{Input}: Unconstrained probability density $p$; hierarchy $T_{N} ;$ constraint condition $\left(f_{2,1}, \cdots, f_{N, N-1}, A, a\right) ;$ proposal distribution $q$; confidential query output $x$; sample size $n$.

\begin{enumerate}
	\item  Initialize $\tilde{x}^{(0)}$ at random
	\item for $i\in [n]$ do
	 \begin{enumerate}
	\item Get $\tilde{x}_{K_{N}}^{(i)}$ from $q\left(\tilde{x}^{(i)} \mid \tilde{x}^{(i-1)}\right)$

  \item for $j=N-1$ to 1 do
         \item \qquad Calculate $\tilde{x}_{K_{j}}^{(i)}$ from $f_{j+1, j}\left(\tilde{x}_{K_{j+1}}^{(i)}, \tilde{x}_{K_{j}}^{(i)}\right)=0$
    \item $\tilde{x}^{(i)} \leftarrow\left(\tilde{x}_{K_{1}}^{(i)}, \cdots, \tilde{x}_{K_{N}}^{(i)}\right)$

    \item Calculate $\mathcal{A}\left(\tilde{x}^{(i-1)}, \tilde{x}^{(i)}\right)=\min \left\{1, \frac{p\left(\tilde{x}^{(i)}\right) q\left(\tilde{x}^{(i-1)} \mid \tilde{x}^{(i)}\right)}{p\left(\tilde{x}^{(i-1)}\right) q\left(\tilde{x}^{(i)} \mid \tilde{x}^{(i-1)}\right)} \mathbb{I}\left(A \tilde{x}^{(i)} \geq a\right)\right\}$

    \item Sample $u \leftarrow U[0,1]$
  

    \item if $u \geq \mathcal{A}$ then
        \item \qquad $\tilde{x}^{(i)} \leftarrow \tilde{x}^{(i-1)}$

    \end{enumerate}
\end{enumerate}

\textbf{Output}: $\left\{\tilde{x}^{(i)} \mid i=1, \cdots, n\right\}$.
\end{algorithm}

\subsubsection{Experimental Results}

Our experiment shows the advantage of $\mathcal{M}_{M H}$ in accuracy by comparing the conditioning $\mathcal{M}_{M H}$ algorithm with the post-processing Topdown algorithm.
The dataset used is NY City Taxi Dataset mentioned above. Trip frequency distribution in all zones is taken as the confidential query $x$, and the Laplace mechanism is selected to perturb $x$. Finally, the output $\tilde{x}$ satisfying the differential privacy and consistency constraints is obtained.

In this experiment we select $L_{1}$ distance between $x$ and $\tilde{x}$ as the performance evaluation criteria. For comparison, we normalized the $L_{1}$ distance. As mentioned above, the dimension of $x$ and $\tilde{x}$ is $m=263+6+1=270$ and therefore $L_{1}=\frac{1}{m}|x-\tilde{x}|_{1}$. Algorithm' s running efficiency at different levels of privacy budget is shown in the following table:

\begin{table}[htbp]
\begin{center}
 \caption{Accuracy Comparison of Algorithms Running on NY City Taxi Dataset at $L_1$-distance}
\begin{tabular}{ |c|c|c|c| c| } 
\hline
$\epsilon$ & Level & $M_{MH}$  & TopDown \\
\hline
\multirow{3}{4em}{\quad\ \ 0.5} & 1 &0.013352& 0.036806 \\ 
& 2 & 0.028890 & 0.162698 \\ 
& 3 & 1.680823 & 2.345461 \\ 
\hline
\multirow{3}{4em}{\quad\ \ \ 1} & 1 & 0.018244 & 0.023974 \\ 
& 2 &0.057345 & 0.091148 \\ 
& 3 & 1.534053 & 1.526361 \\
\hline
\multirow{3}{4em}{\quad\ \ \ 2} & 1  & 0.003445 & 0.005173 \\ 
& 2 & 0.015032 & 0.027614 \\ 
& 3 & 1.052862 & 1.260267\\
\hline
\end{tabular}
\label{comparison-results}
\end{center}
\end{table}

Through the comparison of the two algorithms under different privacy budget conditions and different hierarchy levels, it can be seen that in most cases, the conditioning algorithm $\mathcal{M}_{M H}$ will be more accurate than the classic postprocessing Topdown algorithm. And since the noise decreases as the privacy budget increases, the errors of all algorithms decrease as the privacy budget increases. 


\end{document}